\title{Finding Pairwise Intersections of Rectangles in a Query
  Rectangle%
  \footnote{This research was Supported by the MSIT(Ministry of
    Science and ICT), Korea, under the SW Starlab support
    program(IITP--2017--0--00905) supervised by the IITP(Institute for
    Information \& communications Technology Promotion and the NRF
    grant 2011-0030044 (SRC-GAIA) funded by the government of Korea.}}
\author{Eunjin Oh\thanks{Department of Computer Sience and
    Engineering, POSTECH, Korea, \texttt{\{jin9082,
      heekap\}@postech.ac.kr}} \and Hee-Kap Ahn\footnotemark[2]}
\newtheorem{theorem}{Theorem} 
\newtheorem{lemma}[theorem]{Lemma}
\newtheorem{corollary}[theorem]{Corollary}
\newtheorem{definition}[theorem]{Definition}
\newtheorem{observation}[theorem]{Observation}
\newcommand{\sides}{\ensuremath{\mathcal{S}_\ell}}
\newcommand{\itx}[2]{\ensuremath{I(#1,#2)}}
\newcommand{\ssi}[1]{\ensuremath{\mathcal{S}_C(#1)}}
\newcommand{\ssb}[1]{\ensuremath{\mathcal{S}_B(#1)}}
\newcommand{\sss}[1]{\ensuremath{\mathcal{S}(#1)}}
\newcommand{\slab}[1]{\ensuremath{H(#1)}}
\newcommand{\proj}[2]{\ensuremath{\mathsf{#1}(#2)}}
\newcommand{\axis}[1]{\ensuremath{\mathsf{x}_{#1}}}
\newcommand{\iset}{\ensuremath{\mathcal{S}}}
\newcommand{\oset}{\ensuremath{\mathcal{U}}}
\newcommand{\sset}{\ensuremath{\mathcal{S}}}
\newcommand{\confa}{\textsf{C1}\xspace}
\newcommand{\confb}{\textsf{C2}\xspace}
\newcommand{\confc}{\textsf{C3}\xspace}
\newcommand{\confd}{\textsf{C4}\xspace}
\newcommand{\confe}{\textsf{C5}\xspace}
\newcommand{\SegInt}{\textsc{SegInt}\xspace}
\newcommand{\EPtEnc}{\textsc{EPtEnc}\xspace}
\newcommand{\PtEnc}{\textsc{PtEnc}\xspace}
\newcommand{\RecEnc}{\textsc{RecEnc}\xspace}
\newcommand{\RecCross}{\textsc{RecCross}\xspace}
\newcommand{\RecInt}{\textsc{RecInt}\xspace}
\newcommand{\GridCont}{\textsc{GridCont}\xspace}
\newcommand{\BoxInt}{\textsc{BoxInt}\xspace}
\newcommand{\PairFind}{\textsc{PairFind}\xspace}
\newcommand{\BoxPairInt}[1]{\textsc{BoxPairInt}[#1]\xspace}
\definecolor{magenta(process)}{rgb}{1.0, 0.0, 0.56}
\newbox\ProofSym \setbox\ProofSym=\hbox{%
	\unitlength=0.18ex%
	\begin{picture}(10,10) \put(0,0){\framebox(9,9){}}
	\put(0,3){\framebox(6,6){}}
	\end{picture}}
\def\polylog{\operatorname{polylog}}
\begin{document}
\date{}
\maketitle

\begin{abstract}
  We consider the following problem: Preprocess a set $\iset$ of $n$
  axis-parallel boxes in $\mathbb{R}^d$ so that given a query of an
  axis-parallel box in $\mathbb{R}^d$, the pairs of boxes of
  $\iset$ whose intersection intersects the query box can be reported
  efficiently.  For the case that $d=2$, we present a data structure
  of size $O(n\log n)$ supporting $O(\log n+k)$ query time, where $k$
  is the size of the output.  This improves the previously best known
  result by de Berg et al. which requires $O(\log n+ k\log n)$
  query time using $O(n\log n)$ space.  There has been no result known
  for this problem for higher dimensions, except that for $d=3$, the
  best known data structure supports $O(\sqrt{n}\log^2n+k\log^2n)$
  query time using $O(n\sqrt {n}\log n)$ space.  For a constant $d>2$,
  we present a data structure supporting
  $O(n^{1-\delta}\log^{d-1} n + k \polylog n)$ query time for any
  constant $1/d\leq\delta<1$.  The size of the data structure is
  $O(n^{\delta d - 2\delta + 1}\log n)$.
\end{abstract}

\section{Introduction}
Range searching is one of the fundamental problems, which has been studied
extensively in computational
geometry~\cite{agarwal1999geometric}. Typical problems of this type
are formulated as follows.
\begin{quote}
  Preprocess a set $\mathcal{I}$ of input
  geometric objects so that given a query of geometric object $Q$, the
  objects in $\mathcal{I}$ intersecting $Q$ can be reported or counted efficiently.
\end{quote}
There are a number of variants of the problem including 
checking if an object in $\mathcal{I}$
intersects $Q$, finding the minimum (or maximum) weight of the
objects in $\mathcal{I}$ intersecting $Q$, and computing the sum of the weights of
the objects in $\mathcal{I}$ intersecting $Q$.

In this paper, we consider a variant of the range searching problem, which
is stated as follows. Given a set $\iset$ of $n$ axis-parallel boxes in $\mathbb{R}^d$,
preprocess $\iset$ so that given a query of an axis-parallel box $Q$
in $\mathbb{R}^d$, all the pairs $(S,S')$ of boxes of $\iset$ with
$S\cap S'\cap Q\neq \emptyset$ can be reported efficiently. The 
desired running time for the query algorithm is of form
$O(f(n)+k(g(n)))$ for some functions $f(n)=o(n)$ and $g(n)=o(n)$, where
$k$ is the size of the output.  One straightforward way is to compute
all boxes of $\iset$ intersecting $Q$ and to check whether each pair
$(S,S')$ of them has their intersection point in $Q$.  However,
this straightforward algorithm takes $\Omega(n)$ time  in the worst case
even when $k=0$.

This problem occurs in a number of real-world applications. For instance,
suppose that we are given a collection of personal qualities (or personality
traits) of $n$ clients stored in a database, each of them is represented as
an interval of values. A pair of clients is said to be compatible each other
if there is a common subinterval over every quality of them. A typical query
on such a collection is composed of a range on each of the qualities,
which represents a certain criterion of selecting some compatible pairs of
clients that match the query criterion. 

If we are allowed to use $\Omega(n^2)$ space in the database, we may
precompute all compatible pairs in advance and store them to answer queries efficiently.
Otherwise, it is desirable to devise a way of storing the data using
less amount of space while the query time remains the same or does not increase
much. That is, we need to construct a data structure to answer such a query
efficiently in both the query time and the size of the data structure. This is the goal of
the problem we study in this paper.

\paragraph{Previous Work.} There are a few results on this
problem~\cite{deBerg2015,Gupta,Rahul}.  Consider a simpler problem in
which input objects are orthogonal line segments. Orthogonal line
segments can be considered as degenerate axis-parallel
rectangles. Gupta~\cite{Gupta} presented a data structure of size
$O(n\log^2 n)$ supporting $O(\log^2 n+k)$ query time for this problem,
where $k$ is the size of the output and $n$ is the size of the
input. Later, the size of the data structure and the query time were
improved to $O(n\log n)$ and $O(\log n+k)$, respectively by Rahul et
al.~\cite{Rahul}.

For axis-parallel rectangles in the plane, de Berg et
al.~\cite{deBerg2015} presented a data structure of size $O(n\log n)$
that supports $O(\log n\log^* n+k\log n)$ query time. We observe that
their data structure can be improved to support $O(\log n+ k\log n)$
query time by simply replacing the range searching algorithm
in~\cite{Ptree} with the one in~\cite{Afs}. For details, see
Section~\ref{sec:ds-easycase}.  In fact, this is mentioned in the
journal paper~\cite{deBerg2015-journal} by the authors, which has been
available online recently with query time $O(\log n+ k\log n)$.
The algorithm by de Berg et
al.~\cite{deBerg2015-journal,deBerg2015} does not extend to higher dimensions directly.
Using more observations and techniques, they presented a data
structure of size $O(n\sqrt{n}\log n)$ supporting
$O(\sqrt{n}+k\log^2 n\log^* n)$ query time in
$\mathbb{R}^3$~\cite{deBerg2015}.~\footnote{The
journal paper presents $O(\sqrt{n} \log^2 n+ k\log^2 n)$
query time with the same space complexity~\cite{deBerg2015-journal}.}

One might be concerned on the preprocessing time as well as the size
of the data structure. In this type of problems, however, queries are
supposed to be made in a repetitive fashion and the preprocessing time
can be seen as being amortized over the queries to be made later
on~\cite{chazelle1986Enclusure}. Therefore, we focus mainly on the
space requirement of the data structure and the query time for the
problem as other previous works did.

\paragraph{Our Result.} In this paper, we first present a data structure 
of size $O(n\log n)$ for two-dimensional case that supports $O(\log n+k)$ query time.
This improves the data structure of de Berg et al.~\cite{deBerg2015-journal}. 
Recall that our problem is a generalization of the problem studied by Rahul et al.~\cite{Rahul}.
Although our problem is more general, our data structure with its query algorithm
requires the same storage and running time as theirs.

Moreover, our data structure is almost optimal. 
To see this, observe that our problem can be reduced to the 
\emph{2D orthogonal range reporting problem}. Given a set $\mathcal{P}$ of points in $\mathbb{R}^2$,
the 2D orthogonal range reporting problem asks to preprocess
them so that given a query of an axis-parallel rectangle,
the points of $\mathcal{P}$ contained in the query rectangle can be reported efficiently.
To solve this problem using a data structure for our problem,
we map each point $p$ in $\mathcal{P}$
to two points lying on $p$ (two degenerate boxes).
Then we construct a data structure for our problem on the set of the degenerate boxes for all points in $\mathcal{P}$.
The data structure reports the pairs $(S,S')$ of degenerate boxes
such that $S$ and $S'$ lie on the same position and are contained in a query rectangle.
Therefore, we can answer the 2D orthogonal range reporting problem using the
data structure for our problem
without increasing the running time.
For the 2D orthogonal range reporting problem, it is known that on a pointer machine model, a query time of $O(\polylog n+k)$, where $k$ is the size of the output, can only be achieved at the expense of $\Omega(n\log n/\log\log n)$ storage~\cite{chazelle1990lower}.
Moreover, on a pointer machine model, a query time of $o(\log n+k)$ cannot be achieved
	regardless of the size of the data structure.
Therefore, our query time is optimal, and the size of our data structure is almost optimal.

We also consider the problem in higher dimensions $\mathbb{R}^d$.
For a constant $d>2$, we present a data structure 
that supports $O(n^{1-\delta}\log^{d-1} n + k \log^{d-1} n)$ query time for any constant $\delta$ with $1/d\leq\delta<1$.
The size of the data structure is $O(n^{\delta d - 2\delta + 1}\log n)$.
A constant $\delta$ shows a trade-off between storage and query time.
This is the first result on the problem in higher dimensions.
\medskip

Throughout the paper, we use $\iset=\{S_1,\ldots, S_n\}$ to denote a given set of
$n$ axis-parallel boxes in $\mathbb{R}^d$ for a
constant $d\geq 2$.
For any two boxes
$S_i, S_j\in \iset$,  we use $\itx{i}{j}$ to denote the
intersection of $S_i$ and $S_j$.
Our goal is to preprocess $\iset$ so that for a query
of an axis-parallel box $Q$, we can report all pairs $(S_i,S_j)$ of
boxes of $\iset$ with $I(i,j)\cap Q\neq \emptyset$
efficiently.  We use $\oset(Q)$ and $k(Q)$ to denote the output
and the size of the output for a query $Q$, respectively.  We simply
use $\oset$ and $k$ to denote
$\oset(Q)$ and $k(Q)$, respectively, if they are understood in context.

\section{Planar Case}\label{sec:planar}
In this section, we consider the problem in the plane, that is, we are
given a set $\iset$ of $n$ axis-parallel rectangles in the plane.
We present a data structure of size $O(n\log n)$ that supports $O(\log n+k)$ query
time for queries of axis-parallel rectangles.  This improves the
previously best known data structure with its query algorithm by de
Berg et al.~\cite{deBerg2015-journal}.
Their data structure has size $O(n\log n)$
and supports $O(\log n+k\log n)$ query time~\cite{deBerg2015-journal}.

\subsection{Configurations of Two Intersecting Rectangles}
An axis-parallel rectangle has four sides: the top, bottom, left and
right sides.  We call the top and bottom sides the \emph{horizontal
  sides}, and the left and right sides the \emph{vertical sides}.

Consider a side $ab$ of a rectangle $S\in\sset$ with endpoints $a$ and $b$.
Let $a'b'$ be the segment on $ab$
such that $a'$ and $b'$ are the points closest to $a$ and $b$,
respectively, among all intersection points of $ab$ with input rectangles
other than $S$. We call $a'b'$ the \emph{stretch} of $S$ on $ab$. Note
that $ab$ has no stretch if $ab$ intersects no
rectangles of $\sset\setminus\{S\}$. The stretch of $ab$ is $ab$ if
$a$ and $b$ are contained in some rectangles of $\sset$ other than $S$. 
There is at most one stretch for each side of a rectangle of
$\iset$.  Let $\sides$ be the set of all stretches of the rectangles of $\iset$.

For any pair $(S_i, S_j)$ of rectangles of $\iset$ with
  $\itx{i}{j}\cap Q\neq\emptyset$, it is not difficult to see that the
  pair belongs to one of the following three cases: (1) $Q$ is
  contained in one of the two rectangles of the pair, (2) $Q$ contains
  a corner of $\itx{i}{j}$, or (3) $Q$ intersects the boundary of $\itx{i}{j}$, but
    contain no corner of $\itx{i}{j}$.
  Here we propose another way of describing all the
  cases in terms of stretches so that the query time can be improved
  without increasing the size of the data structures compared to the one
  in~\cite{deBerg2015}.  
  Each of these
  cases can be rephrased into one or two configurations in
  Observation~\ref{obs:config}. More precisely, case (1) corresponds
  to \confa, case (2) corresponds to \confb~and \confc, and case (3)
  corresponds to \confd~and \confe~of Observation~\ref{obs:config}.
\begin{figure}
  \begin{center}
    \includegraphics[width=0.8\textwidth]{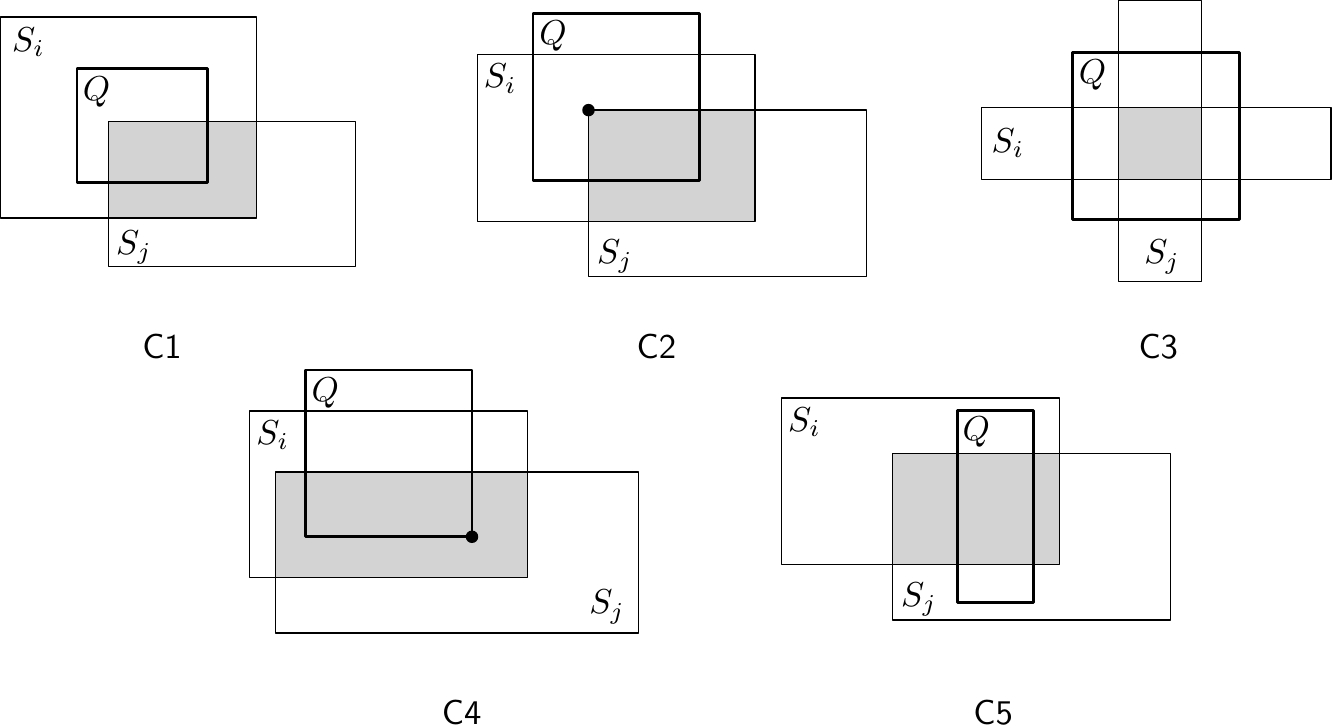}
    \caption{\label{fig:config} Five configurations of $(S_i,S_j)$ and
      $Q$.}
  \end{center}
\end{figure}
\begin{observation}[\textsf{Five Configurations of Intersections.}]\label{obs:config}
  For any pair $(S_i,S_j)$ of rectangles of $\iset$ with
  $\itx{i}{j}\cap Q\neq\emptyset$, one of the followings
  holds. Figure~\ref{fig:config} gives an illustration.
  \begin{itemize}
  \item \confa. $S_i$ or $S_j$ contains $Q$.
  \item \confb. $Q$ contains 
      an endpoint of a stretch of $S_i$ or $S_j$ which is a corner of $\itx{i}{j}$. 
  \item \confc. A stretch of $S_i$ and a stretch of $S_j$ cross $Q$ in
    different directions.
    \item \confd. $\itx{i}{j}$ contains a corner of $Q$.
    \item \confe. 
      $\itx{i}{j}$ and $Q$ cross each other.
  \end{itemize}
\end{observation}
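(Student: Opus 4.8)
The plan is to argue by a case analysis on the position of $Q$ relative to $R:=\itx{i}{j}$, organized by the trichotomy recalled just before the statement: either (1)~$Q\subseteq S_i$ or $Q\subseteq S_j$; or (2)~$Q$ contains a corner of $R$; or (3)~$Q$ meets $\partial R$ but contains no corner of $R$. First I would check that these are exhaustive: if (1) fails then, since $R\subseteq S_i$, the query $Q$ has a point outside $R$; as $R\cap Q\neq\emptyset$ and $Q$ is convex, a segment inside $Q$ from a point of $Q\cap R$ to that outside point meets $\partial R$, so $Q\cap\partial R\neq\emptyset$ and we are in (2) or (3). Case (1) is \confa\ verbatim, so it remains to treat (2) and (3).

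Case (3) yields \confd\ or \confe, and is elementary via the axis projections. A corner of an axis-parallel rectangle is the pair consisting of an endpoint of its $x$-projection and an endpoint of its $y$-projection, so ``$Q$ contains no corner of $R$'' means that, on the $x$-axis or on the $y$-axis, no endpoint of the projection of $R$ lies in that of $Q$; since the projections overlap on both axes, this says that on some axis the projection of $R$ strictly contains that of $Q$. If moreover $R$ contains no corner of $Q$, the symmetric statement gives that on some axis the projection of $Q$ strictly contains that of $R$; this cannot be the same axis, and combined with $Q\not\subseteq R$ (since (1) fails) and $R\not\subseteq Q$ (since $Q$ has no corner of $R$) it forces one of $R,Q$ to be a ``tall'' bar and the other a ``wide'' bar crossing it --- configuration \confe. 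If instead $R$ does contain a corner of $Q$, that is \confd.

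Case (2) is the substantive one. Fix a corner $c$ of $R=S_i\cap S_j$ with $c\in Q$; it is incident to two perpendicular sides of $R$, each of which lies on a side of $S_i$ or of $S_j$. If both lie on sides of the same rectangle, say $S_i$, then $c$ is a corner of $S_i$; since $c\in R\subseteq S_j$ with $S_j\neq S_i$, each of the two sides of $S_i$ incident to $c$ has $c$ as an endpoint and has $c\in S_j$, so it carries a stretch with $c$ as an endpoint. Then $c$ is an endpoint of a stretch of $S_i$, is a corner of $\itx{i}{j}$, and lies in $Q$: this is \confb. Otherwise $c$ is the transversal intersection of a side $e_i$ of $S_i$ and a perpendicular side $e_j$ of $S_j$; each of $e_i,e_j$ meets the other rectangle of the pair at $c$, so each carries a stretch, $\sigma_i$ on $e_i$ and $\sigma_j$ on $e_j$, with $c\in\sigma_i\cap\sigma_j$ and $\sigma_i\perp\sigma_j$. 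If $c$ is an endpoint of $\sigma_i$ or of $\sigma_j$, we again get \confb. If $c$ lies in the relative interior of both, I would derive \confc: $\sigma_i$ and $\sigma_j$ are perpendicular, both meet $Q$, and they cross at $c\in Q$; here the useful fact is that $\sigma_i\cap S_j$ is exactly the side of $R$ contained in $e_i$ (whose endpoints are corners of $\itx{i}{j}$), so any endpoint of $\sigma_i$ that fails to be a corner of $\itx{i}{j}$ lies outside $S_j$ --- and symmetrically for $\sigma_j$ and $S_i$ --- which shows such endpoints are part of a genuine crossing rather than an obstruction to one.

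The step I expect to be the main obstacle is this last sub-case of (2): making ``a stretch of $S_i$ and a stretch of $S_j$ cross $Q$ in different directions'' precise enough to be provably forced whenever $c$ lies in the relative interior of both $\sigma_i$ and $\sigma_j$, while remaining the notion the query algorithm can exploit. This means settling what ``crosses $Q$'' should mean (meets $Q$, spans $Q$ along one axis, or meets $\partial Q$) and then discharging the boundary degeneracies: a side of $S_i$ or $S_j$ flush with a side of $Q$, the corner $c$ lying on $\partial Q$, and stretch endpoints lying inside $Q$ --- where one may have to appeal instead to a different corner of $R$ lying in $Q$, or to a stretch on another side of $S_i$ or $S_j$, to exhibit the crossing. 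Everything else is bookkeeping over which of $S_i,S_j$ supplies each side of $R$.
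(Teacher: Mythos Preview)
The paper itself offers no proof beyond the preceding paragraph's trichotomy (1)--(3) and an asserted correspondence with a figure; your plan follows exactly that outline and fills in the details the paper omits, so the approaches coincide. Your handling of (1), of (3), and of the parts of (2) where the chosen corner $c$ of $R$ is a vertex of $S_i$ or $S_j$, or is an endpoint of one of the two stretches through it, is correct.

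The obstacle you flag in the remaining sub-case of (2) --- $c=e_i\cap e_j$ interior to both $\sigma_i$ and $\sigma_j$ --- is genuine, and in fact cannot be closed with \confb{} as literally worded in the Observation. Take $S_i=[0,10]\times[-20,20]$, $S_j=[-20,20]\times[0,10]$, $Q=[-16,16]\times[-1,1]$, and add small auxiliary boxes so that the stretches on the vertical sides of $S_i$ become $\{0\}\times[-15,15]$ and $\{10\}\times[-15,15]$, and those on the horizontal sides of $S_j$ become $[-15,15]\times\{0\}$ and $[-15,15]\times\{10\}$. Then $\itx{i}{j}\cap Q=[0,10]\times[0,1]\neq\emptyset$, yet all of \confa--\confe{} fail: no stretch endpoint of $S_i$ or $S_j$ coincides with a corner of $\itx{i}{j}$, so the literal \confb{} fails; the horizontal stretches of $S_j$ have $x$-range $[-15,15]\not\supseteq[-16,16]$, so \confc{} fails; no corner of $Q$ lies in $[0,10]^2$, so \confd{} fails; the $y$-projections overlap only partially, so \confe{} fails; and neither $S_i$ nor $S_j$ contains $Q$. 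What makes the paper's \emph{algorithm} correct is that its working definition of a \confb-pair (in the paragraph ``Reporting \confb-pairs of $Q$'') drops the clause ``which is a corner of $\itx{i}{j}$'': it only asks that $Q$ contain an endpoint of a stretch $\ell$ of one rectangle and that the other rectangle meet $\ell\cap Q$. Under that operational reading your argument closes at once --- if $\sigma_i$ fails to cross $Q$ then an endpoint of $\sigma_i$ lies in $Q$, and since $c\in S_j\cap\sigma_i\cap Q$ the broad \confb{} holds; otherwise both $\sigma_i,\sigma_j$ cross $Q$ and \confc{} holds. So the difficulty you anticipated reflects an imprecision in the Observation's phrasing of \confb{}, not a defect in your strategy.
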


We consider the configurations one by one in our query algorithm. We
first report all pairs satisfying \confa (simply, all
\emph{\confa-pairs}), then we report all pairs satisfying \confb
(simply, all \emph{\confb-pairs}), and so on.  There might be a pair
$(S_i,S_j)$ of input rectangles that belongs to more than one configuration. 
To avoid reporting the same pair more than once, we give a priority order
to the configurations such that our algorithm reports a pair exactly once
in the configuration of the highest priority
among the configurations the pair belongs to.
Since there are only five
configurations and we can check in constant time whether a pair belongs to
a configuration or not, this does not increase the asymptotic time complexity
of our algorithm.

\subsection{Reporting All Pairs, except \confe-pairs}
We first show how to construct data structures for finding all
pairs $(S_i,S_j)$ of input rectangles with $\itx{i}{j}\cap Q\neq\emptyset$, 
except \confe-pairs. In Section~\ref{sec:D1}, we show how to find all
\confe-pairs.

\subsubsection{Data Structures}\label{sec:ds-easycase}
We construct four data structures for four different problems: the
\emph{orthogonal segment intersection problem}, the \emph{point
  enclosure problem}, the \emph{orthogonal range reporting
  problem}, and the \emph{rectangle crossing problem}.  There has been
a fair amount of work on these problems. 
We observe that the last problem reduces to the 3D orthogonal range
reporting problem with a four-sided query box, which has also been
studied well.  Thus we use data structures for these four
problems after slightly modifying them to achieve our purpose.

\paragraph{Orthogonal Segment Intersection Problem: {\mdseries\SegInt}.}
The \emph{orthogonal segment intersection problem} asks to preprocess
horizontal input  segments so that given a query of a vertical segment,
the horizontal input  segments intersected by the query can be computed efficiently.
Chazelle~\cite{chazelle1986Enclusure} gave a data structure called the
\emph{hive-graph} to solve this problem efficiently.  The
hive-graph is a planar orthogonal graph with $O(N)$ cells, each of
which has a constant number of edges on its boundary, where $N$ is
the number of the input segments.

The query algorithm first finds the cell of the
hive-graph containing an endpoint of the query segment and traverses
the hive-graph along the query segment from the endpoint to the other
endpoint. All horizontal edges intersected by the query are encountered during the traversal.
In this way, the algorithm finds all horizontal segments intersected by the query in order sorted along the
query.  The query algorithm takes constant time per
output segment, excluding the time for the point location for an endpoint of the query.

In our problem, we construct two hive-graph data structures, one for
the horizontal sides of the rectangles of $\iset$ and one for the vertical
sides of the rectangles of $\iset$.
The query segments used in our query algorithm are stretches of $\sides$. 
To save the time for point locations in the query algorithm, for each endpoint of the stretches of $\sides$,
we find the two cells of
the two hive-graphs that contain the endpoint in the preprocessing phase.
Due to this preprocessing, we can find the sides of the rectangles of $\iset$ crossed by a stretch
$\ell$ of $\sides$ in the sorted order along $\ell$ from one
endpoint of $\ell$ in constant time per output side.
We denote this data structure by \SegInt. 

\paragraph{Point Enclosure Problem: {\mdseries\PtEnc} and {\mdseries\EPtEnc}.}  The \emph{point enclosure
  problem} asks to preprocess input rectangles so that all input
rectangles containing a query point can be computed efficiently.
Chazelle~\cite{chazelle1986Enclusure} gave a data structure for this problem.
We construct this data structure on $\iset$ in the preprocessing time,
and denote the data structure by \PtEnc.
It has size $O(n)$ and allows us to find all rectangles of $\iset$ containing
a query point  in $O(\log n+K)$ time, where $K$ is the size of the output
in this subproblem. 
Moreover, it allows us to check whether there exists such a rectangle in $O(\log n)$ time.

In our query algorithm, we consider this problem for two different
purposes: finding all rectangles of $\iset$ containing a
\emph{corner} of $Q$, and finding all rectangles of $\iset$
containing an \emph{endpoint} of a stretch of $\sides$.  
We perform the former task at most four times in our query algorithm
since $Q$ has four corners. Thus we simply use \PtEnc for this task. 
However, we will perform the latter task
$\Theta(k)$ times in the worst case, which takes $\Omega(k\log n)$ time.
Here $k$ is the size of the output in our query algorithm.
Note that we
have the endpoints of the stretches of $\sides$ in the preprocessing
phase, and therefore the latter task can be done 
in the preprocessing phase.

To do this, we show how the data structure by
Chazelle~\cite{chazelle1986Enclusure} works.  Its primary structure is
a balanced binary search tree on the rectangles of $\iset$ with respect
to the $x$-coordinates of their vertical sides. Each node of the
binary search tree corresponds to a vertical line, and it is augmented
by the hive-graph on the set of the rectangles of $\iset$
intersecting its corresponding vertical line. 
The query algorithm finds $O(\log n)$ nodes of the binary search tree,
and then searches on the hive-graphs associated with the nodes. This
takes $O(\log n+K)$ time due to fractional cascading, where $K$ is the
size of the output in this subproblem.

This means that we consider $O(\log n)$ hive-graphs and spend
$O(\log n)$ time to find the cell containing a query point on one
hive-graph. The point location on the other hive-graphs can be done by
fractional cascading. To save the $\log n$ term in the running time of
the query algorithm, we find the cells of the $O(\log n)$ hive-graphs
containing each endpoint of the stretches of $\sides$ in the
preprocessing time. 
We need $O(n\log n)$ space to store the cells containing
endpoints of the stretches of $\sides$.  Due to the preprocessing,
given an endpoint of a stretch of $\sides$,
we can find all rectangles of $\iset$ containing the endpoint in $O(1+K)$ time.  Note
that $O(1+K)=O(K)$ since each endpoint is contained in at least two
rectangles of $\iset$, and thus $K>1$.
We denote this data structure (\PtEnc associated with pointers
for the endpoints of the stretches) by \EPtEnc.

\paragraph{Orthogonal Range Reporting Problem: {\mdseries\RecEnc}.}  We
want to preprocess all endpoints of the stretches of $\sides$ so that
the endpoints contained in a query rectangle can be computed
efficiently. Chazelle~\cite{chazelle1986Enclusure} presented a
data structure for this problem that has
$O(n\log n/\log\log n)$ size and supports $O(\log n+K)$ query time, where
$K$ is the size of the output.
We denote this data structure by \RecEnc.

\paragraph{Rectangle Crossing Problem: {\mdseries\RecCross} and {\mdseries\RecInt}.}  
We want to preprocess the
stretches of $\sides$ so that all stretches 
crossing a query rectangle can be computed efficiently.  De Berg et
al.~\cite{deBerg2015} also considered this problem. To do this, they
reduce this problem to the orthogonal range reporting problem in three
dimensional space as follows.  Let $[a,b]\times[c,d]$ be a query
rectangle. The query rectangle is crossed by a vertical stretch
$x_1\times [y_1,y_2]$ if and only if $x_1\in [a,b]$,
$y_1\in[-\infty,c]$, and $y_2\in[d,\infty]$.  Using this observation,
they map each vertical stretch $x_1\times [y_1,y_2]$ to the point
$(x_1,y_1,y_2)$ in $\mathbb{R}^3$. Then we can find all vertical stretches
crossing the query rectangle by finding all points contained in the
orthogonal region $[a,b]\times[-\infty,c]\times[d,\infty]$.
Similarly, we can do this for horizontal stretches.  However, they did not
use the fact that a query is unbounded: it  is four-sided in $\mathbb{R}^3$.
In this case, we can use a more efficient algorithm given
by Afshani et al.~\cite{Afs} instead of the one in~\cite{Ptree}.
In fact, this is also mentioned in the journal paper~\cite{deBerg2015-journal}
by the authors, which has been available online recently.
The algorithm by Afshani et al. takes $O(\log n+K)$ time for four-sided query boxes
using a data structure of $O(n\log n/\log\log n)$ size, where $K$ is the size of
the output.
We denote this data structure by \RecCross.
This data structure has size $O(n\log n/\log\log n)$
and allows us to find all  vertical (or horizontal) 
stretches of $\sides$ crossing a query rectangle in $O(\log n+K)$ time,
where $K$ is the size of the output. 

A rectangle $S$ of $\iset$ intersects a query rectangle $Q$ if and only if
(1) $Q$ crosses a side of $S$, (2) $Q$ contains a corner of $S$,
or (3) $Q$ is contained in $S$.
To find all rectangles of $\iset$ intersecting a query rectangle,
we use \RecCross for case (1), use \RecEnc for case (2), and use
\PtEnc for case (3). 
We call the combination of these data structures \RecInt. 
We can find all rectangles of $\iset$ intersecting $Q$ in $O(\log n+K)$ time
using \RecInt, where $K$ is the size of the output in this subproblem.

\subsubsection{Query Algorithms.}
Assume that we have the data structures of size $O(n\log n)$ described
in Section~\ref{sec:ds-easycase}.  Then, we can find all pairs
$(S_i,S_j)$ of $\iset$ with $\itx{i}{j}\cap Q\neq\emptyset$, except \confe-pairs, in $O(\log n+k)$
time.

\paragraph{Reporting \confa-pairs of $Q$.}
We can find the \confa-pairs of $Q$ in $O(\log n+k(Q))$ time. A pair
of rectangles of $\iset$ is a \confa-pair of $Q$ if one rectangle of the
pair contains all four corners of $Q$ and the other rectangle
intersects $Q$.

We find the rectangles of $\iset$ containing all four corners of
$Q$ by finding all rectangles of $\iset$ containing each corner of $Q$
using \PtEnc. Note that there are
$O(k(Q)+1)$ rectangles that contain a corner of $Q$ simply because
every pair of the rectangles containing the corner is in $\oset(Q)$.
(We need ``$+1$'' since it is possible that there is just one rectangle containing the corner,
but $k(Q)$ is zero.)
Thus, we can compute such rectangles in $O(\log n+k(Q))$ time.
Let $\iset_1$ denote
the set of all rectangles containing all four corners of $Q$.

If $\iset_1$ is not empty, we find all rectangles intersecting $Q$ 
in $O(\log n+K)$ time using \RecInt, where $K$ is the number of such rectangles.
Since $\iset_1$ is not empty, $K$ is at most $k(Q)$.
Let $\mathcal{S}_2$ be the set of all rectangles intersecting $Q$. 
We report every pair $(S_1,S_2)$ with $S_1\in \iset_1$ and $S_2\in\iset_2$ as a \confa-pair of $Q$,
which takes $O(\log n+k(Q))$ time.
It is clear that we report all \confa-pairs of $Q$ in this way.

\paragraph{Reporting \confb-pairs of $Q$.}  We can find the
\confb-pairs of $Q$ in $O(\log n+k(Q))$ time. A pair of
rectangles of $\iset$ is a \confb-pair of $Q$ if $Q$ contains
an endpoint of a stretch $\ell$ of one of them and the other intersects $\ell\cap Q$.
We find all stretches of $\sides$ whose
endpoints are in $Q$ in $O(\log n+k(Q))$ time using \RecEnc. The number of such stretches is $O(k(Q))$
because each endpoint of the stretches of $\sides$ is contained in at
least two rectangles of $\iset$ and there are at most four stretches from one rectangle of $\iset$.

For each stretch $\ell$ with an endpoint in $Q$, we want to find all rectangles $S$ of
$\iset$ with $S\cap \ell \cap Q\neq\emptyset$.
Such rectangles
$S$ satisfy one of the followings: $\ell\cap Q$ is intersected by the
boundary of $S$ or $\ell\cap Q$ is contained in $S$.  For the former
case, we use \SegInt.  Starting from the endpoint of $\ell$
contained in $Q$, we traverse the hive-graph along $\ell$ until we
escape from $Q$ or we arrive at the other endpoint of $\ell$.  We find all rectangles $S$ whose sides intersect
$\ell\cap Q$ in time linear in the number of such rectangles using \SegInt.
For the latter case, we compute all rectangles
containing the endpoint of $\ell$ that is also in $Q$ in time linear in the
number of such rectangles using $\EPtEnc$.
Therefore, for each stretch $\ell$ with an endpoint in $Q$, we can find
all rectangles of $\iset$ intersecting $\ell\cap Q$ in time
linear in the number of such rectangles.

By applying this procedure for every stretch with an endpoint in $Q$,
we can find all \confb-pairs of $Q$ in $O(k(Q))$ time, excluding the
time for finding all such stretches.  Therefore, we can compute all
\confb-pairs of $Q$ in $O(\log n+k(Q))$ time in total.

\paragraph{Reporting \confc-pairs of $Q$.}  We can find the
\confc-pairs of $Q$ in $O(\log n+k(Q))$ time.  A pair of
rectangles of $\iset$ is a \confc-pair of $Q$ if two stretches, one from
each rectangle, cross $Q$ in different directions.  Let $\iset_v$
be the set of the rectangles of $\iset$ whose vertical stretches cross
$Q$.  Let $\iset_h$ be the set of the rectangles of $\iset$ whose
horizontal stretches cross $Q$.

We first check whether $\iset_v$ or $\iset_h$ is empty in $O(\log n)$
time using \RecCross. If one of them is empty, there is no \confc-pair of $Q$. 
If both of them are
nonempty, we compute $\iset_v$ and $\iset_h$ in $O(\log n+k(Q))$ time using \RecCross.
The size of $\iset_v$ and
$\iset_h$ is $O(k(Q))$ since every rectangle of $\iset_v$ intersects 
every rectangle of $\iset_h$ in $Q$.  Then we report the pairs $(S,S')$ with
$S\in\iset_v$ and $S'\in\iset_h$ as the \confc-pairs in
$O(\log n+k(Q))$ time in total.

\paragraph{Reporting \confd-pairs of $Q$.}  We can report the
\confd-pairs of $Q$ in $O(\log n+k(Q))$ time.  A pair of
rectangles of $\iset$ is a \confd-pair of $Q$ if the intersection of the rectangles
contains a corner of $Q$. In this case, both rectangles of the pair contains a corner of $Q$.
We first check whether
there exists a rectangle of $\iset$ containing a corner of $Q$
in $O(\log n)$ time using \PtEnc.
Again, the number of the rectangles of $\iset$ containing a corner of $Q$ is
$O(k(Q))$ as every pair of such rectangles is in $\oset(Q)$.
If there exists a rectangle containing a corner of $Q$, we find all rectangles containing
the corner of $Q$ in
$O(\log n+k(Q))$ time using \PtEnc.  Then we report
all pairs consisting of the rectangles containing the corner of $Q$.  We do this for each of the other
corners of $Q$. Then we can report all \confd-pairs in
$O(\log n+k(Q))$ time.

\subsection{Reporting \confe-pairs}\label{sec:D1}
We have shown how to find all pairs of rectangles of
$\iset$ intersecting each other in $Q$, except for the
\confe-pairs. 
There might be some pairs of rectangles that belong to both
\confe~and one of the other configurations. As mentioned earlier,
this can be checked in constant time per pair of rectangles. Since we use a
priority order over the configurations, we assume that they have
already been reported by the algorithm for the configurations other
than \confe.

A pair of rectangles of $\iset$ is a \confe-pair of a query rectangle $Q$ if the intersection of the rectangles 
and $Q$ cross each other.   
In the following, we show how to find and report the \confe-pairs of $Q$
not belonging to any other configuration such that the horizontal sides of the intersection 
intersect the vertical sides of $Q$.
The \confe-pairs not belonging to any other configuration
such that the vertical sides of the intersection 
intersect the horizontal sides of $Q$ can be found analogously.

\paragraph{One-Dimensional Segment Tree.}
We construct a \emph{one-dimensional segment tree} $T$ of
$\iset$ with respect to the $x$-axis as follows~\cite{CGbook}.
The segment tree is a balanced binary search tree on the orthogonal projections of 
the rectangles of $\iset$ onto the $x$-axis. 
Each node $v$ of
the balanced binary search tree corresponds to a closed vertical slab
$\slab{v}$. The union of all vertical slabs corresponding
to the nodes at the same level is $\mathbb{R}^2$. 
We say that a rectangle $S$ \emph{crosses} $\slab{v}$ if $S$ intersects
$\slab{v}$ and  no vertical side of $S$ is contained in $\slab{v}$.  Let
$\ssi{v}$ be the set of the rectangles of $\iset$ that
cross $\slab{v}$ but do not cross $\slab{u}$ for the parent
$u$ of $v$ in $T$.  There are $O(\log n)$ nodes $v$ with
$S\in\ssi{v}$ for a rectangle $S$. Moreover, the union of $\slab{v}$'s for all such nodes
$v$ contains $S$. 
Let $\ssb{v}$ be the set of the rectangles of
$\iset$ whose left or right side is contained in the interior of $\slab{v}$.
Note that $\ssb{v}$ is empty for every leaf node $v$. 
For a rectangle $S\in\iset$, there are at most two nodes $v$
of $T$ with $S\in\ssb{v}$ at each level of $T$, and
each such node lies on one of the two paths of $T$ from the root to
two leaf nodes $w, w'$ with the left side of $S$ contained in $\slab{w}$ and
the right side of $S$ contained in $\slab{w'}$. 
We use $\sss{v}$ to denote the union of $\ssi{v}$
and $\ssb{v}$.  For each node $v$ of $T$, we store $\ssb{v}$ and
$\ssi{v}$. The binary search tree together with the sets
$\ssb{\cdot}$ and $\ssi{\cdot}$ forms the segment tree of
$\iset$.
The size of $T$ is $O(n\log n)$.

\paragraph{Canonical Nodes of a \confe-pair.}
Consider any \confe-pair $(S_i,S_j)$ of $Q$.  There are $O(\log n)$
nodes $v$ of $T$ such that $S_i, S_j\in\sss{v}$ 
and $\itx{i}{j}\cap Q\cap H(v)\neq\emptyset$.
This means that there can be $\Omega(k\log n)$ such nodes in the worst case
for all \confe-pairs in total.
Instead of considering all of them, we 
use \emph{canonical nodes} (to be defined below) such that
there is a unique canonical node of $(i,j,Q)$ in $T$ for any $\confe$-pair.
We will show how to find the canonical nodes and report all \confe-pairs efficiently
in the subsequent sections.
See Figure~\ref{fig:node}.

\begin{definition}\label{def:ijq}
  For a rectangle $Q$ and a pair $(S_i,S_j)$ of the rectangles of
  $\iset$ with $\itx{i}{j}\cap Q\neq\emptyset$, a node $v$
  of $T$ is called the \emph{canonical node} of $(i,j,Q)$ if the left
  side of $Q$ is contained in $\slab{v}$ and
  both $S_i$ and $S_j$ are in $\sss{v}$ satisfying
  $S_i\in\ssi{v}$ or $S_j\in\ssi{v}$.
\end{definition}

Note that not every canonical node of some triple $(i,j,Q)$ defines
a \confe-pair of $Q$, though $\itx{i}{j}\cap Q\neq\emptyset$.
However, there is a canonical node of $(i,j,Q)$ in $T$ for each
\confe-pair of $Q$ such that the horizontal sides of 
	$\itx{i}{j}$ intersect the vertical sides of $Q$.
\begin{figure}
  \begin{center}
    \includegraphics[width=0.75\textwidth]{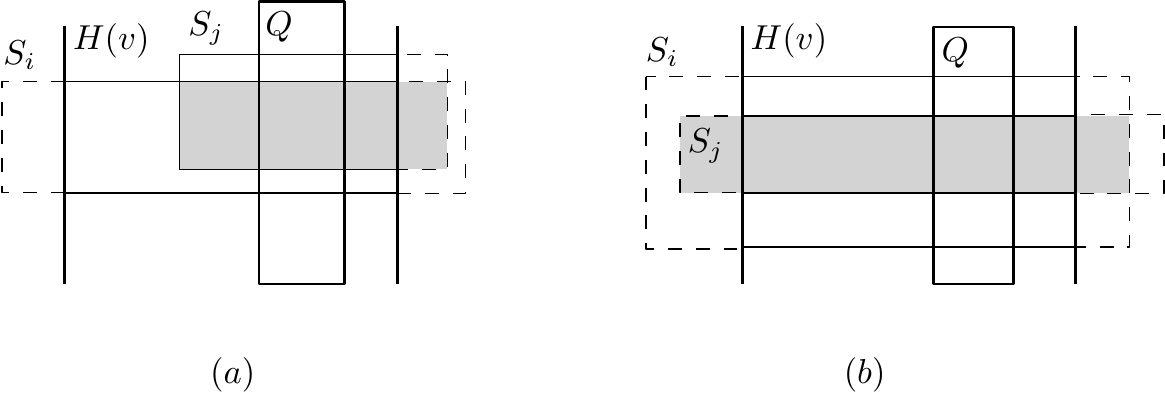}
    \caption{\label{fig:node}(a) A canonical node $v$ of $(i,j,Q)$. It
      holds that $S_i\in\ssi{v}$ and $S_j\in\ssb{v}$. (b) A canonical
      node $v$ of $(i,j,Q)$. It holds that both $S_i$ and $S_j$ are in
      $\ssi{v}$.}
  \end{center}
\end{figure}

\begin{lemma}\label{lem:exist}
  For any \confe-pair $(S_i,S_j)$ of $Q$ such that the horizontal sides of 
  $\itx{i}{j}$ intersect the vertical sides of $Q$, there is a canonical node of
  $(i,j,Q)$ in $T$.
\end{lemma}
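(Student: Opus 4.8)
The plan is to locate the canonical node by walking down the segment tree $T$ along the path determined by the left side of $Q$, and to argue that somewhere along this path the membership of $S_i$ and $S_j$ transitions in a way that forces the canonical‑node condition to hold. First I would fix the \confe-pair $(S_i,S_j)$ of $Q$ so that the horizontal sides of $\itx{i}{j}$ cross the vertical sides of $Q$; in particular the left side of $Q$ lies strictly inside the horizontal extent of $\itx{i}{j}$, hence inside the horizontal extents of both $S_i$ and $S_j$. Let $\pi$ be the root‑to‑leaf path in $T$ whose slabs $\slab{v}$ all contain the left side of $Q$ (the left side of $Q$ is a vertical segment, and since the slabs at each level partition $\mathbb{R}^2$ and are nested along $\pi$, this path is well defined; if the $x$‑coordinate of the left side of $Q$ falls on a slab boundary, pick the path consistently, as is standard for segment trees). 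Every node on $\pi$ has the left side of $Q$ in its slab, so for the canonical‑node definition it only remains to find a $v\in\pi$ with $S_i,S_j\in\sss{v}$ and $S_i\in\ssi{v}$ or $S_j\in\ssi{v}$.

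Next I would track, as we descend $\pi$, whether each of $S_i$ and $S_j$ \emph{crosses} the current slab $\slab{v}$ (in the segment‑tree sense: it meets the slab but has no vertical side inside it). At the root the slab is all of $\mathbb{R}^2$, which no bounded rectangle crosses. As we go down $\pi$, the slabs shrink and eventually reach a leaf, whose slab is so thin that $S_i$ (resp.\ $S_j$) does cross it — more precisely, since the left side of $Q$ is interior to the $x$‑extent of $S_i$, once the slab along $\pi$ is narrow enough and positioned at that $x$‑coordinate, neither vertical side of $S_i$ can lie in it, so $S_i$ crosses it. Hence there is a first node $v_i$ on $\pi$ at which $S_i$ crosses $\slab{v_i}$; by definition of $\ssi{\cdot}$ this means $S_i\in\ssi{v_i}$, because the parent slab is not crossed by $S_i$. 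Likewise define $v_j$ for $S_j$. Without loss of generality $v_i$ is an ancestor of $v_j$ (or equals it), i.e.\ $v_i$ is the one that appears earlier — equivalently $v_j$ is at least as deep. I claim $v\coloneqq v_j$ is the desired canonical node: at $v_j$ we have $S_j\in\ssi{v_j}$, and it remains to check $S_i\in\sss{v_j}$.

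For that last step I would use the structure of the path: at $v_j$, the rectangle $S_i$ already crosses $\slab{v_i}\supseteq\slab{v_j}$. If $S_i$ still crosses $\slab{v_j}$ then it lies in $\ssi{v}$ for some ancestor $v$ of or equal to $v_j$ on $\pi$, but more simply $S_i\in\ssi{v_i}\subseteq\sss{\cdot}$ along the path — here I should be careful, since $\ssi{\cdot}$ is assigned to exactly one node per branch. The clean way: $S_i$ crosses $\slab{v_i}$; consider the node on $\pi$ where $S_i$ last crosses the slab (going down). Either $S_i$ crosses $\slab{v_j}$, in which case $S_i$ crosses every slab between $v_i$ and $v_j$ on $\pi$ and hence $S_i\in\ssi{v_i}$ with $v_i$ an ancestor of $v_j$ — but then I should instead take the canonical node to be $v_i$ and verify $S_j\in\sss{v_i}$: at $v_i$ either $S_j$ crosses $\slab{v_i}$ (then $v_i=v_j$ and both are in $\ssi{v_i}$, done, Figure~\ref{fig:node}(b)) or $S_j$ does not cross $\slab{v_i}$; since the left side of $Q$ is interior to the $x$‑extent of $S_j$, $S_j$ must then have a vertical side inside $\slab{v_i}$, so $S_j\in\ssb{v_i}\subseteq\sss{v_i}$, done (Figure~\ref{fig:node}(a)). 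Or $S_i$ does not cross $\slab{v_j}$, so between $v_i$ and $v_j$ the rectangle $S_i$ stopped crossing, meaning some descendant slab contains a vertical side of $S_i$; but again since the left side of $Q$ is interior to the $x$‑extent of $S_i$, that vertical side lies in $\slab{v_j}$ as well once we are deep enough — so $S_i\in\ssb{v_j}\subseteq\sss{v_j}$, and with $S_j\in\ssi{v_j}$ we are done.

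The main obstacle is the bookkeeping in this last paragraph: $\ssi{\cdot}$ is defined so that a rectangle is charged to exactly one node per branch (the topmost slab it crosses), so I must phrase the argument in terms of "the first node on $\pi$ at which $S$ crosses the slab" and then carefully case on whether the \emph{other} rectangle, at that same node, crosses the slab (giving configuration (b)) or has a vertical side inside the slab (giving configuration (a)), using repeatedly the single geometric fact that the left side of $Q$ is strictly interior to the common horizontal extent of $S_i$ and $S_j$ — which is exactly what the hypothesis "the horizontal sides of $\itx{i}{j}$ intersect the vertical sides of $Q$" buys us. Getting the edge cases right (slab boundaries, the left side of $Q$ touching a vertical side of $S_i$ or $S_j$) is routine but must be handled by fixing a tie‑breaking convention at the outset.
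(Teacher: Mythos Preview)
Your approach is correct and coincides with the paper's: descend the root-to-leaf path $\pi$ determined by the $x$-coordinate of the left side of $Q$, locate the unique nodes $v_i,v_j\in\pi$ with $S_i\in\ssi{v_i}$ and $S_j\in\ssi{v_j}$, and take the one closer to the root as the canonical node. The detour through $v_j$ is unnecessary---go directly to the higher node $v_i$ (exactly as the paper does) and note that your Case~B is vacuous, since once $S_i$ crosses $\slab{v_i}\supseteq\slab{v_j}$ its vertical sides lie outside $\slab{v_j}$ as well, so $S_i$ also crosses $\slab{v_j}$.
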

\begin{proof}
  Consider the \confe-pairs of $Q$ such that the horizontal sides of
  the intersection intersects the vertical sides of $Q$. 
  Let $p$ be the intersection
  between the left side of $Q$ and the top side of $\itx{i}{j}$.  Then
  there is a path $\pi$ from the root node to some leaf node $u$ with
  $p\in\slab{u}$ in $T$.  Consider a node $w$ in $\pi$.  Since $p$
  lies on the left side of $Q$, the slab $\slab{w}$ contains the left
  side of $Q$.  Moreover, $\slab{w}$ intersects both $S_i$ and $S_j$.
	
  We claim that there is a canonical node of $(i,j,Q)$ in $\pi$.  By
  the construction of the segment tree, $S_i\in\ssb{v}$ for the root
  node $v$ and $S_i\not\in\ssb{u}$ for the leaf node $u$ of $\pi$.
  Thus, there is a node $w_i$ of $\pi$ with $S_i\in\ssi{w_i}$.  For a
  node $w$ closer to the root node than $w_i$, $S_i\in\ssb{w}$.  For a
  node $w'$ closer to the leaf node than $w_i$ along $\pi$,
  $S_i\not\in\sss{w'}$.  This also holds for $S_j$, so there is a node
  $w_j$ of $\pi$ with $S_j\in\ssi{w_j}$.  Without loss of generality,
  we assume that $w_i$ lies between the root node and $w_j$ (including
  them) along $\pi$.  Then we have $S_i\in\ssi{w_i}$ and
  $S_j\in\sss{w_i}$. Since $w_i$ is in $\pi$, $\slab{w_i}$ contains
  the left side of $Q$.  Therefore, $w_i$ is a canonical node of
  $(i,j,Q)$ in $\pi$. 
\end{proof}

We need the following lemma to bound the total number of canonical
nodes for $Q$ over all pairs of rectangles of $\iset$ by $O(k(Q))$.
Notice that the following lemma holds for a pair of rectangles of any
configuration from \confa to \confe.
\begin{lemma}\label{lem:canonical}
  For any rectangle $Q$ and any pair $(S_i,S_j)$ of rectangles of
  $\iset$ with $\itx{i}{j}\cap Q\neq\emptyset$, there is at most one
  canonical node of $(i,j,Q)$ in $T$.
\end{lemma}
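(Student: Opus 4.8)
The plan is to show that the defining conditions of a canonical node pin down a unique node of $T$. Recall that a node $v$ is a canonical node of $(i,j,Q)$ when (a) the left side of $Q$ is contained in $\slab{v}$, (b) both $S_i$ and $S_j$ lie in $\sss{v}$, and (c) at least one of $S_i,S_j$ lies in $\ssi{v}$. First I would observe that all nodes $v$ satisfying condition (a) lie on a single root-to-leaf path $\pi$ of $T$: since the slabs at one level partition $\mathbb{R}^2$ and the left side of $Q$ is a vertical segment, exactly one slab per level contains it, and these slabs are nested, so the corresponding nodes form a path from the root down to some leaf. Hence it suffices to show that at most one node of $\pi$ satisfies (b) and (c).

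Next I would analyze, for a fixed rectangle $S\in\iset$, the pattern in which $S$ appears in $\ssi{\cdot}$, $\ssb{\cdot}$, and $\sss{\cdot}$ along $\pi$. Using the structure of the segment tree recalled in the excerpt: as we descend $\pi$, the rectangle $S$ first belongs to $\ssb{w}$ for the nodes near the root (because $\slab{w}$ is wide and $S$'s vertical sides are interior to it), then it belongs to $\ssi{w}$ for a contiguous stretch of nodes where $S$ starts to cross $\slab{w}$, and thereafter $S\notin\sss{w}$. In particular, along $\pi$ the set of nodes $w$ with $S\in\sss{w}$ is a prefix of $\pi$, and within that prefix the nodes with $S\in\ssi{w}$ form a suffix of the prefix (equivalently, once $S\in\ssi{w}$, the ancestors of $w$ on $\pi$ have $S\in\ssb{w}$ and the descendants on $\pi$ have $S\notin\sss{w}$). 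This is exactly the monotone behavior already exploited in the proof of Lemma~\ref{lem:exist}, so I would just cite that reasoning. I should also note the degenerate possibility that $S$ never enters $\ssi{\cdot}$ along $\pi$ (e.g.\ a vertical side of $S$ could lie on the boundary of the relevant slab), but that case only makes the ``$\ssi$'' part of the argument for that rectangle vacuous and does not affect uniqueness.

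Now I would combine the two rectangles. Let $w_i$ (resp.\ $w_j$) be the node of $\pi$ where $S_i$ (resp.\ $S_j$) belongs to $\ssi{\cdot}$; more precisely, along $\pi$ let $a_i$ be the lowest node with $S_i\in\sss{\cdot}$ and let $b_i$ be the highest node with $S_i\in\ssi{\cdot}$, so that $S_i\in\ssi{w}$ exactly for $w$ between $b_i$ and $a_i$ on $\pi$, and similarly $a_j,b_j$ for $S_j$. A node $v\in\pi$ satisfies (b) iff $v$ is at depth at most $\min(\operatorname{depth}(a_i),\operatorname{depth}(a_j))$, and satisfies (c) iff $v$ lies in the $b_i$-to-$a_i$ range or in the $b_j$-to-$a_j$ range. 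Without loss of generality assume $a_i$ is the shallower of $a_i,a_j$, i.e.\ $S_i$ stops being present no deeper than $S_j$ does. Then condition (b) forces $v$ to be an ancestor of (or equal to) $a_i$, and on that part of $\pi$ the range $b_j$-to-$a_j$ contributes nothing beyond $a_i$, so (c) forces $v$ to lie in the $b_i$-to-$a_i$ range; but any node strictly above $b_i$ has $S_i\in\ssb{v}$ and $S_i\notin\ssi{v}$ and, crucially, also $S_i\notin\ssi{v}$ and (checking) $S_j\notin\ssi{v}$ either since $b_j$ is no shallower than $b_i$ — wait, that needs care. The clean way is: the set of nodes of $\pi$ satisfying (b) and (c) is the intersection of an initial segment of $\pi$ (from condition (b)) with a union of two contiguous segments of $\pi$ (from condition (c)); I must argue this intersection has size at most one. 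This follows because the two $\ssi{\cdot}$-ranges, intersected with the $\sss{\cdot}$-prefix, are each either empty or end exactly at the bottom of their rectangle's $\sss{\cdot}$-prefix, and the shorter prefix is the one imposed by (b); a short case check on which of $a_i,a_j$ is shallower, and whether $S_i$ or $S_j$ ever reaches $\ssi{\cdot}$, shows only the single bottom node of the controlling prefix can satisfy all three conditions.

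The main obstacle I expect is precisely this last bookkeeping: making the ``initial segment meets union of two intervals gives at most one point'' argument airtight when the two rectangles' transition depths interleave in different ways, and handling the boundary/degenerate cases (a rectangle whose side lies on a slab boundary, or a rectangle that is in $\ssb{\cdot}$ all the way down $\pi$ without ever entering $\ssi{\cdot}$). I would organize this as a small finite case analysis on the relative order along $\pi$ of the four nodes $a_i,b_i,a_j,b_j$, using the monotonicity established in the previous paragraph, and conclude that in every case at most one node of $\pi$ is a canonical node of $(i,j,Q)$.
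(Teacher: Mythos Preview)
Your overall plan matches the paper's: restrict attention to the unique root-to-leaf path $\pi$ containing the left side of $Q$, and analyze where $S_i,S_j$ sit in $\ssi{\cdot}$ and $\sss{\cdot}$ along $\pi$. But you overcomplicate the structure of $\ssi{\cdot}$ along $\pi$, and this is what forces you into the unfinished case analysis at the end.

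The gap is here: you treat the nodes $w\in\pi$ with $S\in\ssi{w}$ as a ``contiguous stretch'' and set up a range $[b_i,a_i]$. In fact this range is always a \emph{single node} (or empty). Your own parenthetical already proves this: you note that once $S\in\ssi{w}$, every strict descendant $w'$ of $w$ on $\pi$ has $S\notin\sss{w'}$. Concretely, if $S$ crosses $\slab{w}$ then the $x$-projection of $S$ contains that of $\slab{w}$, hence also that of every descendant's slab, so $S$ crosses $\slab{w'}$ and $\slab{\mathrm{parent}(w')}$ simultaneously, giving $S\notin\ssi{w'}$ and $S\notin\ssb{w'}$. Thus along $\pi$ there is at most one node $w_i$ with $S_i\in\ssi{w_i}$ and at most one node $w_j$ with $S_j\in\ssi{w_j}$.

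Once you use this, the argument is immediate and no case analysis on four interleaved depths is needed. Condition (c) already forces any canonical node to be $w_i$ or $w_j$. If $w_i=w_j$ you are done. Otherwise, say $w_i$ is the shallower; then $S_i\notin\sss{w_j}$ by the observation above, so $w_j$ fails condition (b) and only $w_i$ can be canonical. This is exactly the paper's proof. Your ``intersection of an initial segment with a union of two intervals has size at most one'' claim is not true in general and cannot be salvaged without collapsing those intervals to singletons first.
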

\begin{proof}
  Let $v$ be a canonical node of $(i,j,Q)$ in $T$. Since the left side
  of $Q$ is contained in $\slab{v}$, the node $v$ is in the path $\pi$
  of $T$ from the root node to the leaf node $u$ such that the left
  side of $Q$ is contained in $\slab{u}$.  By the construction of the
  segment tree, there is at most one node $w_i$ on $\pi$ with
  $S_i\in\ssi{w_i}$, and there is at most one node $w_j$ on $\pi$ with
  $S_j\in\ssi{w_j}$.  Therefore, no node of $T$ other than $w_i$ and
  $w_j$ is a canonical node of $(i,j,Q)$.
	
  Without loss of generality, we assume that $v=w_i$. Then we have
  $S_j\in\sss{w_i}$ by the definition of the canonical node.
  If $S_j\in\ssi{w_i}$, we have
    $w_i=w_j$ and $w_i$ is the unique canonical node.
    If $S_j\in\ssb{w_i}$, $w_j$
    is not a canonical node of $(i,j,Q)$ because
  $w_i$ lies between the root node and $w_j$ (including
  the root node) along $\pi$ and $S_i\notin \sss{w_j}$.
Therefore, there is at most  one canonical node of $(i,j,Q)$.
\end{proof}

\begin{corollary}\label{cor:size-canonical}
  The total number of canonical nodes for a query rectangle $Q$ is
  $O(k(Q))$.
\end{corollary}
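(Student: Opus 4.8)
The plan is to derive this immediately from Lemma~\ref{lem:canonical} together with Definition~\ref{def:ijq}. First I would note that, by Definition~\ref{def:ijq}, a node of $T$ can be a canonical node of a triple $(i,j,Q)$ only when the pair $(S_i,S_j)$ already satisfies $\itx{i}{j}\cap Q\neq\emptyset$; that is, only when $(S_i,S_j)\in\oset(Q)$. Consequently, every node counted is the canonical node of some triple $(i,j,Q)$ with $(S_i,S_j)$ an output pair, and the number of such pairs is exactly $k(Q)$.

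Next I would apply Lemma~\ref{lem:canonical}, which states that for each fixed pair $(S_i,S_j)$ with $\itx{i}{j}\cap Q\neq\emptyset$ there is at most one canonical node of $(i,j,Q)$ in $T$. Summing this bound over the at most $k(Q)$ output pairs shows that the number of (pair, canonical-node) incidences is at most $k(Q)$; in particular the number of distinct nodes of $T$ that serve as a canonical node of at least one triple $(i,j,Q)$ is at most $k(Q)=O(k(Q))$, which is the claim.

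There is no real obstacle here: all of the content already sits in Lemma~\ref{lem:canonical} (uniqueness along the root-to-leaf path) and in the fact that Definition~\ref{def:ijq} only ever speaks of pairs whose intersection meets $Q$. The one point to be careful about is that, unlike the counting arguments for the individual configurations earlier in the section (where an additive ``$+1$'' appeared because a single rectangle could contain a corner of $Q$ without inducing an output pair), no such correction is needed here: a pair contributing a canonical node is by definition already an element of $\oset(Q)$, so $k(Q)$ itself — not $k(Q)+1$ — is a valid upper bound.
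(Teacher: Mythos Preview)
Your argument is correct and matches the paper's intent: the corollary is stated without proof, immediately after Lemma~\ref{lem:canonical}, precisely because it follows by summing the ``at most one canonical node per output pair'' bound over the $k(Q)$ pairs in $\oset(Q)$, exactly as you describe. Your additional remark that no ``$+1$'' correction is needed here is a nice clarification but not something the paper bothers to mention.
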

Our general strategy is the following. Given a query rectangle $Q$, we
find a set of nodes of the segment tree $T$ that contains the
canonical node of $(i,j,Q)$ for every \confe-pair $(S_i,S_j)$
not belonging to any other configuration such that the horizontal sides of 
	$\itx{i}{j}$ intersect the vertical sides of $Q$
in $O(\log n+k(Q))$ time. The
size of this set is $O(k(Q))$.  For each such node $v$, we find all
\confe-pairs $(S_i,S_j)$ such that $v$ is a canonical node of
$(i,j,Q)$ in time linear in the number of the output.

\subsubsection{Finding All Canonical Nodes for \confe-pairs}
\label{sec:T1}
In this subsection, we present data structures and their query
algorithms to find a set of canonical nodes of $(i,j,Q)$ with
$\itx{i}{j}\cap Q\neq\emptyset$ for a query rectangle $Q$.  This set
contains the canonical node of $(i,j,Q)$ for every \confe-pair
$(S_i,S_j)$ not belonging to any other configuration.  We show how to do this
for the \confe-pairs such that the horizontal sides of $\itx{i}{j}$
intersect the vertical sides of $Q$. 

\paragraph{Data Structures.}  For each node $v$ of $T$ and each
rectangle $S$ of $\sss{v}$, 
we define the \emph{trimmed rectangle} for
$(S,v)$ as the smallest rectangle containing $S_v\cap U(v)$, where
$S_v=S\cap H(v)$ and $U(v)=\bigcup_{S'\in\ssi{v}}S'$.
See Figure~\ref{fig:trimmed} for an illustration.
\begin{figure}
  \begin{center}
    \includegraphics[width=0.7\textwidth]{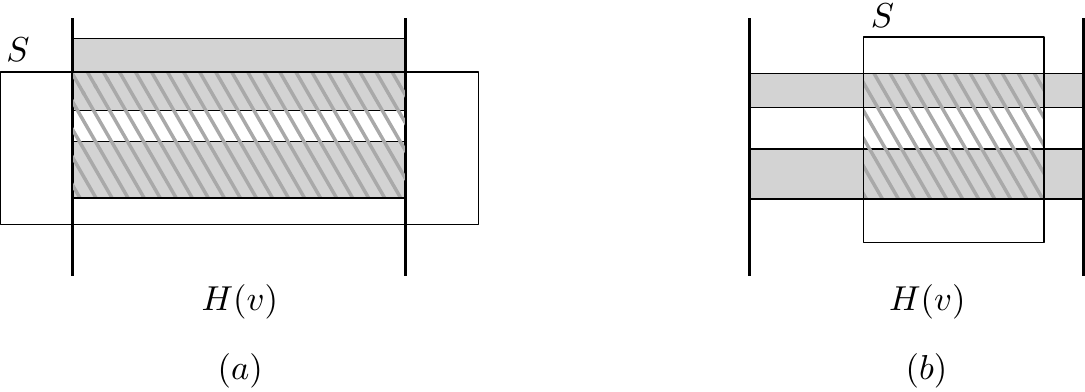}
    \caption{\label{fig:trimmed} A node $v$ and a rectangle $S\in\sss{v}$.
      The gray regions represent $H(v)\cap \bigcup_{S'\in\ssi{v}}S'$.
      (a) The trimmed rectangle (dashed region) for $(S,v)$ with $S\in\ssi{v}$.
      (b) The trimmed rectangle (dashed region) for $(S,v)$ with $S\in\ssb{v}$.}
  \end{center}
\end{figure}
Let $\mathcal{L}$ be the set of the horizontal sides of all trimmed
rectangles for all nodes of $T$.  Note that
$|\mathcal{L}|=O(n\log n)$. 
To compute $\mathcal{L}$ efficiently, we sort the rectangles of $\iset$ in
decreasing order with respect to their top sides in $O(n\log n)$ time. This allows us to sort
all rectangles of $\mathcal{S}(v)$ for each node $v$ of $T$ in the same depth in $O(n)$ time in total.
Therefore, we can sort the rectangles of $\mathcal{S}(v)$ in
decreasing order with respect to their top sides for every node $v$ of $T$ in $O(n\log n)$ time in total.
Similarly, we sort the rectangles of $\mathcal{S}_C(v)$ for every node $v$ of $T$ in $O(n\log n)$ time. 
The trimmed rectangle for $(S,v)$ is $S\cap H(v)$ for a rectangle $S$ of $\ssi{v}$.
For a rectangle $S$ of $\ssb{v}$, the top side of the trimmed rectangle for $(S,v)$ is the highest
top side of the rectangles of $\ssi{v}$ lying below the top side of $S$ if the top side of $S$ is not contained in any rectangle of $\ssi{v}$. Otherwise, the top side of the trimmed rectangle is the top side of $S$. 
Therefore, the top sides of the trimmed rectangles for $(S,v)$ can be computed in $O(|\mathcal{S}(v)|)$ time for a node $v$ of $T$
and all rectangles $S\in\mathcal{S}(v)$.
Thus we can compute $\mathcal{L}$ in time linear in its size, which is $O(n\log n)$.

We construct the hive-graph on $\mathcal{L}$, which allows us to
report all horizontal sides of $\mathcal{L}$ intersecting a query
vertical segment $\ell$ in sorted order along $\ell$ in $O(\log n+K)$
time, where $K$ is the size of output~\cite{chazelle1986Enclusure}.
Since the size of $\mathcal{L}$ is $O(n\log n)$, the hive-graph has
$O(n\log n)$ size. We make each segment of $\mathcal{L}$ to point to
the rectangle of $\iset$ from which the segment comes.

\paragraph{Query Algorithm.}  Given a query rectangle $Q$, our
query algorithm finds all sides of $\mathcal{L}$ intersecting the left
side of $Q$ using the hive-graph on $\mathcal{L}$. Then for each such
side, our query algorithm marks the node of $T$ pointed by the side as
a canonical node in $O(\log n+k)$ time due to the following lemmas.

\begin{lemma}\label{lem:witness}
  The query algorithm finds the canonical node of $(i,j,Q)$ for every
  \confe-pair $(S_i,S_j)$ not belonging to any other configuration
  such that the horizontal sides of 
  	$\itx{i}{j}$ intersect the vertical sides of $Q$.
\end{lemma}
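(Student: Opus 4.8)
The lemma asks only that the algorithm mark the canonical node of $(i,j,Q)$, which exists by Lemma~\ref{lem:exist}; so the plan is to exhibit one segment of $\mathcal{L}$ that the hive-graph query along the left side of $Q$ is forced to report and that is recorded at that node. First I would fix, using Lemma~\ref{lem:exist}, a canonical node $v$ of $(i,j,Q)$: by Definition~\ref{def:ijq} the left side of $Q$ lies in $\slab{v}$, both $S_i$ and $S_j$ lie in $\sss{v}$, and at least one of them, say $S_i$, lies in $\ssi{v}$.

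Next I would pin down the geometry of the crossing. Since $(S_i,S_j)$ is a \confe-pair in no other configuration, the rectangles $\itx{i}{j}$ and $Q$ cross each other and the pair is not a \confd-pair; hence neither of $\itx{i}{j}$ and $Q$ contains a corner of the other, i.e.\ they cross in the ``plus-sign'' manner. Together with the hypothesis that the horizontal sides of $\itx{i}{j}$ meet the vertical sides of $Q$, this forces $\itx{i}{j}$ to pierce $Q$ horizontally: the top side of $\itx{i}{j}$ crosses the left side of $Q$ at a single point $p$, the point $p$ lies in the relative interior of the left side of $Q$, the height of $p$ lies strictly between the two horizontal sides of $Q$, and the abscissa of $p$ lies strictly between the two vertical sides of $\itx{i}{j}$.

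Now I would produce the witness. Pick $S\in\{S_i,S_j\}$ whose top side contains the top side of $\itx{i}{j}$ (at least one of the two does), and let $R$ be the trimmed rectangle for $(S,v)$, i.e.\ the smallest rectangle containing $S\cap\slab{v}\cap U(v)$ with $U(v)=\bigcup_{S'\in\ssi{v}}S'$. I claim $p\in S\cap\slab{v}\cap U(v)$: the point $p$ lies on the top side of $S$, hence in $S$; it lies on the left side of $Q$, which is contained in $\slab{v}$, hence in $\slab{v}$; and $p\in\itx{i}{j}\subseteq S_i\subseteq U(v)$ because $S_i\in\ssi{v}$. Therefore $p\in R$, and since every point of $S\cap\slab{v}\cap U(v)$ has height at most that of the top side of $S$ — which is exactly the height of $p$ — the top side of $R$ is the horizontal segment through $p$ spanning the $x$-extent of $R$. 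This segment belongs to $\mathcal{L}$; its height lies strictly between the horizontal sides of $Q$, and its $x$-extent contains the part of the top side of $\itx{i}{j}$ lying in $\slab{v}$, in whose relative interior the abscissa of the left side of $Q$ lies. Hence the hive-graph query along the left side of $Q$ reports the top side of $R$; the algorithm marks the node $R$ was built for, namely $v$; and $v$ is the canonical node of $(i,j,Q)$.

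I expect the crux to be the claim $p\in U(v)$ — equivalently, that trimming $S$ by $U(v)$ does not pull the top side of $R$ below the height of $p$. This is where the defining property of a canonical node (that $S_i$ or $S_j$ lies in $\ssi{v}$, not merely in $\sss{v}$) is essential, since it is exactly what certifies $p\in U(v)$ through $p\in\itx{i}{j}\subseteq S_i$. A secondary technical point is the treatment of degeneracies — the left side of $Q$ lying on a slab boundary of $T$, or $\itx{i}{j}$ being a segment — which can be handled by a general-position assumption, or by noting that the hive-graph also reports horizontal segments that meet the query vertical segment at an endpoint and that marking a few spurious nodes is harmless, because all \confe-pairs sharing a canonical node are enumerated afterwards in time linear in the output.
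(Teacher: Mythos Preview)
Your argument is correct and follows essentially the same route as the paper: exhibit a horizontal side of some trimmed rectangle at the canonical node $v$ that the left side of $Q$ must cross. The only difference is in the choice of witness. The paper takes $S_j$ (symmetrically $S_i$) and argues, via a squeeze, that the top side of its trimmed rectangle $S_j'$ lies between the top side of $S_j$ and the top side of $\itx{i}{j}$, both of which intersect the left side of $Q$; the step ``top side of $S_j$ meets the left side of $Q$'' is where the exclusion of \confa\ is invoked. You instead pick the $S\in\{S_i,S_j\}$ whose top side coincides with the top side of $\itx{i}{j}$, so the crossing point $p$ sits at the maximum height of $S$, forcing the top side of the trimmed rectangle to pass exactly through $p$; this makes the intersection with the left side of $Q$ immediate and bypasses the \confa\ exclusion. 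Both arguments hinge on the same key point you correctly isolate: the canonical-node condition $S_i\in\ssi{v}$ (or $S_j\in\ssi{v}$) is precisely what guarantees $p\in U(v)$, so trimming cannot push the relevant horizontal side below $p$.
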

\begin{proof}
  Consider a \confe-pair $(S_i,S_j)$ of $Q$ not belonging to any other
  configurations such that the horizontal sides of $\itx{i}{j}$
  intersect the vertical sides of $Q$. There is a unique canonical
  node $v$ of $(i,j,Q)$ by Lemma~3 and Lemma~4.  Let
  $S_i'$ and $S_j'$ be the trimmed rectangles for $(S_i,v)$ and
  $(S_j,v)$, respectively.

  We claim that a horizontal side of $S_j$ is intersected by the left
  side of $Q$.  Since $(S_i,S_j)$ belongs to \confe, the left side of
  $S_j$ lies to the left of $Q$, and the right side of $S_j$ lies to
  the right of $Q$. There are only two cases to consider: a horizontal
  side of $S_j$ is intersected by the left side of $Q$, or $S_j$
  contains $Q$. For the second case, $(S_i,S_j)$ belongs to \confa.
  This contradicts the assumption that $(S_i,S_j)$ does not belong to
  any configuration other than \confe. Thus the only possible case is
  the first one, and the claim holds.
  	
  Now we claim that a horizontal side of $S_j'$ is intersected by the
  left side of $Q$, and thus the query algorithm finds $v$ as the
  canonical node of $(i,j,Q)$.  Without loss of generality, we assume
  that the top side of $S_j$ is intersected by the left side of
  $Q$. The top side of $S_j'$ lies in between the top side of $S_j$ and the top side of $\itx{i}{j}$. Since the top side of $S_j$ and the top side of $\itx{i}{j}$ intersects the left side of $Q$, the claim holds.
\end{proof}

\begin{lemma}
  The number of the sides of $\mathcal{L}$ intersecting the left side
  of $Q$ is $O(k(Q))$.
\end{lemma}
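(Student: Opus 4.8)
The plan is to charge every horizontal side of $\mathcal{L}$ that meets the left side $\ell$ of $Q$ to a pair of $\oset(Q)$, so that each pair is charged only $O(1)$ times; this gives the bound $O(k(Q))$ at once. Write $x_0$ for the $x$-coordinate of $\ell$ and $[y^-,y^+]$ for the $y$-range of $Q$, so $\ell=\{x_0\}\times[y^-,y^+]$.

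First I would produce a witness pair for each such side. Let $h$ be a horizontal side of the trimmed rectangle for some $(S,v)$, $S\in\sss{v}$, with $h$ meeting $\ell$; assume $h$ is the top side of that trimmed rectangle (the bottom case is symmetric), let $p=h\cap\ell$, and let $y_h$ be the height of $h$. Since $p\in\ell$, its $x$-coordinate $x_0$ lies in $\slab{v}$ (the $x$-extent of $h$ is contained in that of $S\cap\slab{v}$), so $v$ lies on the root-to-leaf path $\pi$ of $T$ ending at the leaf whose slab contains $\ell$ — the path used in Lemmas~\ref{lem:exist} and~\ref{lem:canonical} — and $y_h\in[y^-,y^+]$. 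By definition of the trimmed rectangle, $y_h$ is the largest $y$-coordinate of a point of the compact set $(S\cap\slab{v})\cap U(v)$, so this set contains a point $q$ of height $y_h$; fix $S'\in\ssi{v}$ with $q\in S'$. Because $S'$ crosses $\slab{v}$, the rectangle $S'\cap\slab{v}$ has the full width of $\slab{v}$, so $S'$ contains the entire horizontal segment of $\slab{v}$ at height $y_h$, and in particular $p$. Also $p$ lies on the boundary of the trimmed rectangle, which is contained in the rectangle $S\cap\slab{v}$, so $p\in S$; and $p\in Q$ because $p\in\ell\subseteq Q$. Hence $p\in S\cap S'\cap Q$, so, provided $S'\neq S$, we have $(S,S')\in\oset(Q)$, and I would charge $h$ to the unordered pair $\{S,S'\}$, breaking ties among admissible witnesses by index to make the charge well defined.

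Next I would bound the multiplicity. Fix $\{A,B\}\in\oset(Q)$; any side charged to it comes from some $(S,v)$ with $\{S,S'\}=\{A,B\}$, $S'\in\ssi{v}$, and $v\in\pi$, and — renaming if needed — $S=A$, $S'=B$, so $B\in\ssi{v}$ with $v\in\pi$. As observed in the proof of Lemma~\ref{lem:canonical}, $\pi$ contains at most one node $u$ with $B\in\ssi{u}$, hence there is at most one such $v$, and it contributes at most its two horizontal sides; exchanging the roles of $A$ and $B$ doubles this. So $\{A,B\}$ receives at most four charges, and therefore the number of charged sides is at most $4\,|\oset(Q)|=O(k(Q))$.

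The hard part will be the degenerate case in which the only admissible witness is $S$ itself — that is, $S\in\ssi{v}$ and $h$ is the top (or bottom) edge of $S$ clipped to $\slab{v}$ while no rectangle of $\ssi{v}$ other than $S$ reaches height $y_h$ inside $\slab{v}$ — since then the construction above produces no pair of $\oset(Q)$. I expect this to be the crux: one must argue that such a side still carries a distinct witness in $\sss{v}$ yielding a pair of $\oset(Q)$, by examining how $U(v)$ and hence the trimmed rectangle are formed in that situation (using that $p$ lies on the boundary of $(S\cap\slab{v})\cap U(v)$), or else that such sides never enter $\mathcal{L}$ at all. Once that case is settled, the bound follows from the two steps above.
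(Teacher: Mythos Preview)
Your charging scheme and multiplicity bound are essentially the paper's argument: the paper also charges the side $h$ of the trimmed rectangle for $(S_j,v)$ to a pair $(S_i,S_j)$ with $S_i\in\ssi{v}$ containing $h$, and then uses that such a $v$ is necessarily the (unique) canonical node of $(i,j,Q)$ to bound the number of charges per pair. Your constant of $4$ per unordered pair is a bit looser than the paper's claim of one charge per pair, but equally sufficient for the $O(k(Q))$ conclusion.

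As for the degenerate case you single out --- $S\in\ssi{v}$ with no other rectangle of $\ssi{v}$ covering the side --- the paper does \emph{not} treat it separately. Its proof simply says ``a horizontal side of $S_j'$ is contained in some rectangle of $\ssi{v}$, say $S_i$'' and charges to $(S_i,S_j)$; when $S_j\in\ssi{v}$ this witness may well be $S_j$ itself, so the paper is tacitly counting the degenerate pair $(S_j,S_j)$ in $\oset(Q)$. In other words, the ``crux'' you anticipate is not a subtlety you are missing but a point the paper glosses over. If one insists that $\oset(Q)$ contain only pairs with $i\neq j$, the lemma as stated is not literally true --- take $n$ pairwise $y$-disjoint rectangles whose $x$-intervals all contain $x_0$ and whose top sides all lie in the $y$-range of $Q$; then $k(Q)=0$ while $\Theta(n)$ sides of $\mathcal{L}$ meet the left side of $Q$ --- and one would need an additive term for rectangles meeting the left side of $Q$. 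The paper's intended reading appears to absorb such sides into $k(Q)$ via self-pairs.
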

\begin{proof}
  We use a charging scheme as follows.  We charge each horizontal side
  $\ell$ of $\mathcal{L}$ intersecting the left side $\ell_Q$ of $Q$
  to a pair $(S_i,S_j)\in\oset(Q)$ with $S_i\in\ssi{v}$ and
  $S_j\in\sss{v}$ such that both $S_i$ and $S_j$ contain the
  intersection point of $\ell$ and $\ell_Q$. If there are more than
  one such pair, we charge $\ell$ to an arbitrary one.
	
  We claim that there exists such a pair for every horizontal side of
  $\mathcal{L}$ intersecting the left side of $Q$.  Consider a
  horizontal side $\ell$ of $\mathcal{L}$.  Let $S_j$ be the rectangle
  of $\iset$ defining $\ell$. In other words, let $S_j$ be a rectangle
  of $\iset$ such that the trimmed rectangle $S_j'$ for $(S_j,v)$ has
  $\ell$ as its horizontal side for some node of $T$.  By the
  definition of the trimmed rectangle, a horizontal side of $S_j'$ is
  contained in some rectangle of $\ssi{v}$, say $S_i$. Thus, the
  intersection of the horizontal side of $S_j'$ and the left side of
  $Q$ is contained in $S_i$.  This means that $(S_i,S_j)\in\oset(Q)$.
	
  Now we claim that each pair $(S_i,S_j)\in\oset(Q)$ is charged at
  most once in this way.  In each node $v$, a pair $(S_i,S_j)$ is
  charged at most once. Moreover, $(S_i,S_j)$ is charged only in the
  canonical node of $(i,j,Q)$, which is unique by Lemma~4.  Therefore,
  $(S_i,S_j)$ is charged at most once, and the lemma holds.
\end{proof}

\begin{lemma}\label{lem:find}
  Given a query rectangle $Q$, we can find a set of at most $k$ nodes of $T$
  containing all canonical nodes for \confe-pairs not belonging to any
  other configuration in $O(\log n+k)$ time. 
\end{lemma}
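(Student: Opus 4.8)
The plan is to read Lemma~\ref{lem:find} off the machinery already in place: Lemma~\ref{lem:witness}, the preceding lemma bounding the number of sides of $\mathcal{L}$ crossing $\ell_Q$, and the $O(\log n+K)$ reporting interface of the hive-graph on $\mathcal{L}$. The set of nodes in question is exactly the one produced by the query algorithm already sketched. Given $Q$, I would treat its left side $\ell_Q$ as a vertical query segment, locate the cell of the hive-graph on $\mathcal{L}$ containing one endpoint of $\ell_Q$ (a corner of $Q$), traverse the hive-graph along $\ell_Q$ to enumerate every horizontal side of $\mathcal{L}$ that $\ell_Q$ crosses, and for each such side follow its stored pointer to the node $v$ of $T$ whose trimmed rectangle contributed that side, marking $v$. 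The returned set is the collection of marked nodes; the analogous structure built on the $y$-axis segment tree handles, with identical bounds, the \confe-pairs whose vertical sides of $\itx{i}{j}$ meet the horizontal sides of $Q$.

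For correctness I would invoke Lemma~\ref{lem:witness}: for every \confe-pair $(S_i,S_j)$ not belonging to any other configuration with the horizontal sides of $\itx{i}{j}$ meeting the vertical sides of $Q$, the canonical node $v$ of $(i,j,Q)$ (which exists and is unique by Lemmas~\ref{lem:exist} and~\ref{lem:canonical}) has a trimmed rectangle one of whose horizontal sides crosses $\ell_Q$; that side belongs to $\mathcal{L}$, so it is reported during the traversal and causes $v$ to be marked. Hence the marked set contains every canonical node we need. (It may also contain canonical nodes of triples $(i,j,Q)$ with $\itx{i}{j}\cap Q\neq\emptyset$ that are not \confe-pairs, but the statement only asks for containment, so this is harmless.) For the size bound I would use the injective charging in the proof of the preceding lemma, which assigns each horizontal side of $\mathcal{L}$ crossing $\ell_Q$ to a distinct pair of $\oset(Q)$; thus at most $k$ such sides exist, and since each reported side marks exactly one node of $T$, at most $k$ distinct nodes are marked.

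Finally, for the running time: the point location of a corner of $Q$ in the hive-graph on $\mathcal{L}$ takes $O(\log n)$ and is performed a constant number of times, the traversal then reports the $K\le k$ crossed sides in $O(K)$ further time by the hive-graph property, and following a pointer and marking a node is $O(1)$ per side, for a total of $O(\log n+k)$. The one point that needs care is the size bound: Corollary~\ref{cor:size-canonical} only says there are $O(k)$ canonical nodes overall, so I must instead argue directly that the \emph{reported} sides number at most $k$ and map to nodes without introducing new ones --- which is precisely the injective charging of the preceding lemma together with the fact that each side of $\mathcal{L}$ carries a pointer to a single node of $T$. Everything else is a direct application of the already-established lemmas and the hive-graph interface, so I do not expect a genuine obstacle beyond this bookkeeping.
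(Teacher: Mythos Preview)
Your proposal is correct and follows essentially the same approach as the paper: the paper states Lemma~\ref{lem:find} without a separate proof, treating it as an immediate consequence of the query algorithm (traverse the hive-graph on $\mathcal{L}$ along the left side of $Q$ and mark the nodes pointed to) together with Lemma~\ref{lem:witness} for containment of all needed canonical nodes and the preceding charging lemma for the $\le k$ bound on the number of reported sides. Your write-up makes these dependencies explicit and adds the analogous case for the other orientation, exactly as the paper intends.
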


\subsubsection{Handling Each Canonical Node to Find All \confe-pairs}
Let $\mathcal{V}_Q$ be the set of all nodes we found in
Section~\ref{sec:T1}.  For each node $v\in\mathcal{V}_Q$, we show how to
find all \confe-pairs $(S_i,S_j)$ not belonging to any other configuration
such that $v$ is a canonical node
of $(i,j,Q)$.  Here, we consider only the case that $S_i\in\ssi{v}$
and $S_j\in\sss{v}$.  The other case that $S_j\in\ssi{v}$ and
$S_i\in\sss{v}$ can be handled analogously. 
Moreover, we consider only the \confe-pairs such that 
the horizontal sides of $\itx{i}{j}$ intersect the vertical sides of $Q$.
The other case can be handled analogously.

For each node $v$, we
spend $O(1+k(v))$ time, where $k(v)$ is the number of the pairs
$(S_i,S_j)$ with $\itx{i}{j}\cap Q\neq\emptyset$ such that $v$ is
a canonical node of $(i,j,Q)$. Note that the sum of $k(v)$ for every node $v$ of $\mathcal{V}_Q$
is $O(k)$ by Lemma~\ref{lem:canonical}. Once we do this for every node in
$\mathcal{V}_Q$, we can obtain all \confe-pairs for the canonical nodes
of $(i,j,Q)$ not belonging to any other configuration in $O(k)$ time,
excluding the time for computing all such
canonical nodes.

\paragraph{Overall Strategy.}  Let $\mathcal{S}_Q$ be
the set of all rectangles
$S_j\in\sss{v}$ for each node $v\in\mathcal{V}_Q$ such that a horizontal
side of the trimmed rectangle for $(S_j,v)$ intersects the left side
of $Q$. We obtain $\mathcal{S}_Q$ while computing the
set $\mathcal{V}_Q$ in Section~\ref{sec:T1}.
Consider a \confe-pair $(S_i,S_j)$ not belonging to any other configuration
such that the horizontal sides of $\itx{i}{j}$ intersect the vertical side of $Q$. 
	The proof of Lemma~\ref{lem:witness} shows that a horizontal side of the trimmed rectangle for $(S_j,v)$ is intersected by the left side of $Q$, where $v$ is the canonical node of $(i,j,Q)$.
This means that $S_j$ is in $\mathcal{S}_Q$.
Since we already have $S_j$, the remaining task is to find $S_i$.

Given a node $v\in\mathcal{V}_Q$ and a rectangle $S_j$ in $\mathcal{S}_Q\cap \ssi{v}$,
we are to compute all rectangles $S_i\in\sss{v}$ 
with $\itx{i}{j}\cap Q\neq\emptyset$.
For a rectangle $S_i\in\sss{v}$ with $\itx{i}{j}\cap Q\neq\emptyset$, we observe that
$\mathsf{y}(S_j)$, $\mathsf{y}(S_i)$ and $\mathsf{y}(Q)$ 
contain a common point, where $\mathsf{y}(A)$ is the
the orthogonal projection of a set $A\subseteq \mathbb{R}^2$ onto the $y$-axis.
There are two cases: $\proj{y}{S_j}\cap\proj{y}{Q}$ contains an endpoint 
of $\proj{y}{S_i}$, or $\proj{y}{S_j}\cap\proj{y}{Q}$ is contained in $\proj{y}{S_i}$.

\paragraph{Data Structures and Preprocessing.}
We maintain two data structures, one for finding the rectangles of the
first case and the other for finding the rectangles of the second case.
The first one is organized as follows. For each node $v$ of $T$, we maintain two
sorted lists of the rectangles of $\ssi{v}$, one with respect to their
top sides and the other with respect to their bottom sides.  We make
each rectangle $S$ of $\ssb{v}$ to point to the rectangle of
$\ssi{v}$ with highest bottom side (and highest top side) lying below
the top side (and bottom side) of the trimmed rectangle for $(S,v)$.  Similarly, we make $S$ to
point to the rectangle of $\ssi{v}$ with lowest top side (and lowest
bottom side) lying above the bottom side (and top side) of the trimmed rectangle for $(S,v)$.

For the second one, we use a \emph{partially persistent data
  structure} of a linked list.  Once we update a linked
list and destroy the old versions, we cannot search any element on an
old version any longer.  But a partially persistent data structure
allows us to access any version at any time by keeping the changes on
the linked list.  Driscoll et al.~\cite{persistent} presented a
general method to make a data structure based on pointers partially
persistent.  Using their method, we can construct a partially
persistent data structure of a linked list.

In our case, the linked list has rectangles of $\ssi{v}$ as its
elements.  We consider a $y$-coordinate as a time stamp.  A rectangle
$S\in\ssi{v}$ is appended to the linked list at time $t_1$ and
is deleted from the linked list at time $t_2$, where $t_1$ is the
$y$-coordinate of the top side of $S$ and $t_2$ is the
$y$-coordinate of the bottom side of $S$.  Each insertion and
deletion can be done in constant time, which is subsumed in the
total preprocessing time.  For each horizontal side of $S_c\in\ssi{v}$, we need an
extra pointer that points to the first element of the persistent data
structure at time $t$, where $t$ is the $y$-coordinate of the
horizontal side.  The size of the partially persistent data structure
is linear in the size of $\sss{v}$. 
Due to the partially
persistent data structure and the pointers associated with
the horizontal sides of the rectangles of $\sss{v}$, we can find all rectangles of $\sss{v}$
containing a horizontal side of $S_c\in\ssi{v}$ in time linear in the size of the output.

\paragraph{Query Algorithm.}
Given a node $v\in\mathcal{V}_Q$ and a rectangle $S_j$ in $\mathcal{S}_Q\cap \ssi{v}$,
we are to compute all rectangle $S_i\in\sss{v}$ 
such that $\mathsf{y}(S_j)$, $\mathsf{y}(S_i)$ and $\mathsf{y}(Q)$ contain a common point.
Recall that there are two cases: $\proj{y}{S_j}\cap\proj{y}{Q}$ contains an endpoint 
of $\proj{y}{S_i}$, or $\proj{y}{S_j}\cap\proj{y}{Q}$ is contained in $\proj{y}{S_i}$.
A horizontal side of the 
trimmed rectangle for $(S_j,v)$ is intersected by the left side of $Q$ by the definition of $\mathcal{S}_Q$. Thus at least one endpoint of $\proj{y}{S_j}$ is contained in
$\proj{y}{Q}$.
We assume that the endpoint of $\proj{y}{S_j}$ with smaller $y$-coordinate is contained in $\proj{y}{Q}$. In other words, the bottom side of $S_j$ intersects $Q$.
The other case can be handled analogously.

To find the rectangles $S_i$ belonging to the first case, we do the followings.
We search the sorted list of the rectangles of
$\ssi{v}$ with respect to their top sides starting from the rectangle
of $\ssi{v}$ with lowest top side lying above the bottom side of $S_j$. 
Note that we can obtain the starting point using the pointer
that the bottom side of $S_j$ has.  We stop searching the sorted list when we reach the top side
of $S_j$ or the top side of $Q$. 
In this way, we can find all rectangles $S_i$ of $\ssi{v}$ belonging to the first case 
in $O(1+K)$ time, where $K$ is the number of such rectangles.

To find the rectangles $S_i$ belonging to the second case, we do the followings.
A rectangle $S_i$ belonging to the second case intersects the bottom side $\ell$ of $S_j$. 
We search the partially persistent data structure at time $t$, where
$t$ is the $y$-coordinate of $\ell$. Specifically, 
starting from the pointer that $\ell$ points to, we
traverse the linked list at time $t$. All rectangles we encounter
are the rectangles containing $\ell$.  This
takes $O(1+K')$ time, where $K'$ is the number of such rectangles.

In total, we spend $O(1+k(v))$ time for each node $v\in\mathcal{V}_Q$, where $k(v)$ is the number
of the pairs $(S_i,S_j)$ of $\oset(Q)$ such that the canonical node of $(i,j,Q)$ is $v$.
Note that $k(v)$ is at least one for every node $v\in\mathcal{V}_Q$ by the construction of $\mathcal{V}_Q$. 
Once we do this for every node
in $\mathcal{V}_Q$, we can obtain $\oset(Q)$ in $O(1+k(Q))$ time in
total.
\begin{lemma}
  Given a query rectangle $Q$, we can find all \confe-pairs in
  $O(\log n+k(Q))$ time.
\end{lemma}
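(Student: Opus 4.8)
The lemma combines the three pieces already assembled in this section: (i) Lemma~\ref{lem:find}, which finds a set $\mathcal{V}_Q$ of at most $k(Q)$ nodes of $T$ containing every canonical node of a \confe-pair not belonging to any other configuration (and, along the way, the set $\mathcal{S}_Q$ of the corresponding ``outer'' rectangles $S_j$), in $O(\log n + k(Q))$ time; (ii) the per-node query algorithm just described, which for each $v\in\mathcal{V}_Q$ and each $S_j\in\mathcal{S}_Q\cap\ssi{v}$ reports all $S_i\in\sss{v}$ with $\itx{i}{j}\cap Q\neq\emptyset$ such that $v$ is the canonical node of $(i,j,Q)$, in $O(1+k(v))$ time; and (iii) Lemma~\ref{lem:canonical}/Corollary~\ref{cor:size-canonical}, which guarantees $\sum_{v\in\mathcal{V}_Q} k(v)=O(k(Q))$ because each \confe-pair has a unique canonical node.

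First I would invoke Lemma~\ref{lem:find} to compute $\mathcal{V}_Q$ (and the associated rectangles $S_j\in\mathcal{S}_Q$, together with the node $v$ and the side of $Q$ each came from) in $O(\log n + k(Q))$ time. Then, iterating over the nodes $v\in\mathcal{V}_Q$ and, within each, over the rectangles $S_j\in\mathcal{S}_Q\cap\ssi{v}$, I run the two-case query algorithm above: the sorted-list search with the precomputed pointers handles the pairs where $\proj{y}{S_j}\cap\proj{y}{Q}$ contains an endpoint of $\proj{y}{S_i}$, and the partially persistent linked list handles the pairs where $\proj{y}{S_j}\cap\proj{y}{Q}\subseteq\proj{y}{S_i}$; by the analysis above this costs $O(1+k(v))$ per node. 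Summing over $\mathcal{V}_Q$ and using $\sum_v k(v)=O(k(Q))$ gives $O(k(Q))$ for this phase, hence $O(\log n + k(Q))$ in total. Finally, by Lemma~\ref{lem:witness} every \confe-pair $(S_i,S_j)$ not belonging to any other configuration, with the horizontal sides of $\itx{i}{j}$ meeting the vertical sides of $Q$, has its (unique) canonical node in $\mathcal{V}_Q$, so it is reported; the symmetric subcases (the roles of $S_i$ and $S_j$ swapped, and the vertical sides of $\itx{i}{j}$ meeting the horizontal sides of $Q$) are handled by the analogous data structures built with the roles of the axes exchanged, as noted throughout. A \confe-pair that also belongs to some other configuration has already been reported there by the priority order, so we skip it; this check is $O(1)$ per candidate pair and does not affect the bound.

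The genuinely new content of the proof is just the bookkeeping that the four groups of \confe-pairs (two axis choices $\times$ two choices of which rectangle is the ``$\ssi{v}$'' one) together with the earlier configurations \confa--\confd cover all pairs in $\oset(Q)$ exactly once, and that none of the four runs individually exceeds $O(\log n + k(Q))$; since each run is an instance of the machinery already bounded above, this is immediate. The main obstacle — making sure the per-node cost telescopes to $O(k(Q))$ rather than $O(k(Q)\log n)$ — has already been dispatched by the uniqueness of the canonical node (Lemma~\ref{lem:canonical}), so the remaining argument is a short assembly. I would therefore write the proof as: ``By Lemma~\ref{lem:find} \dots; by the query algorithm of this subsection and Corollary~\ref{cor:size-canonical} \dots; by Lemma~\ref{lem:witness} every such \confe-pair is reported; the remaining cases are symmetric.''
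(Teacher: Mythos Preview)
Your proposal is correct and follows essentially the same approach as the paper: the paper does not give a separate proof of this lemma but treats it as the immediate consequence of Lemma~\ref{lem:find} (finding the candidate canonical nodes), the per-node $O(1+k(v))$ processing via the sorted lists and the partially persistent linked list, and Lemma~\ref{lem:canonical}/Corollary~\ref{cor:size-canonical} (so that $\sum_v k(v)=O(k(Q))$), together with Lemma~\ref{lem:witness} for correctness and the symmetric runs for the remaining axis/role choices. Your write-up is exactly this assembly.
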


Therefore, we have the following theorem.
\begin{theorem}\label{thm:planar}
  We can construct a data structure of size $O(n\log n)$ on a set
  $\iset$ of $n$ axis-parallel rectangles so that for a query
  axis-parallel rectangle $Q$, the pairs $(S_i,S_j)$ of $\iset$ with
  $S_i\cap S_j\cap Q\neq\emptyset$ can be reported in $O(\log n+k)$ time,
  where $k$ is the size of the output.
\end{theorem}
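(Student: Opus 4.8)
The plan is to assemble Theorem~\ref{thm:planar} from the pieces already established in Sections~\ref{sec:ds-easycase}--\ref{sec:D1}, observing that the whole query reduces to the five configurations of Observation~\ref{obs:config} together with the priority ordering that guarantees each pair is reported exactly once. First I would recall that every pair $(S_i,S_j)$ with $\itx{i}{j}\cap Q\neq\emptyset$ falls into at least one of \confa--\confe, so reporting all pairs of each configuration (and suppressing duplicates by the priority order, checkable in $O(1)$ time per pair) reports exactly $\oset(Q)$. Thus it suffices to combine the five subroutines.

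Next I would account for the space: the data structures needed are \SegInt, \PtEnc, \EPtEnc, \RecEnc, \RecCross (hence \RecInt), the one-dimensional segment tree $T$ with the sets $\ssb{\cdot},\ssi{\cdot}$, the hive-graph on $\mathcal{L}$, the sorted lists of $\ssi{v}$ at each node, and the partially persistent linked lists at each node. Each of these was shown to have size $O(n\log n)$ or smaller (\PtEnc has size $O(n)$; \RecEnc and \RecCross have size $O(n\log n/\log\log n)$; \EPtEnc, $T$, $\mathcal{L}$ and its hive-graph, and the per-node auxiliary lists all have size $O(n\log n)$). Summing over this constant number of structures gives total size $O(n\log n)$. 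The preprocessing to attach cell pointers for the endpoints of $\sides$, to sort the $\mathcal{S}(v)$'s, to compute $\mathcal{L}$, and to build the persistent lists was also argued to fit within $O(n\log n)$ time, which is not part of the theorem's statement but worth noting.

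Then I would sum the query times. The \confa-, \confb-, \confc-, and \confd-subroutines were each shown to run in $O(\log n+k(Q))$ time in the corresponding paragraphs of Section~\ref{sec:ds-easycase}'s ``Query Algorithms'' part. For \confe-pairs: by Lemma~\ref{lem:find} we compute in $O(\log n+k)$ time a set $\mathcal{V}_Q$ of at most $k$ nodes of $T$ containing every canonical node of an \confe-pair not belonging to another configuration (using Lemma~\ref{lem:exist} for existence of a canonical node when the horizontal sides of $\itx{i}{j}$ meet the vertical sides of $Q$, and Lemma~\ref{lem:canonical} for uniqueness), and the symmetric case (vertical sides of the intersection meeting horizontal sides of $Q$) is handled by the same construction with axes swapped. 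For each $v\in\mathcal{V}_Q$ the handling step spends $O(1+k(v))$ time where $k(v)\ge 1$, and $\sum_{v\in\mathcal{V}_Q}k(v)=O(k)$ by Lemma~\ref{lem:canonical}, so all \confe-pairs are found in $O(\log n + k)$ time. Adding the five $O(\log n + k)$ bounds (a constant number of them) gives the claimed $O(\log n + k)$ overall query time.

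I do not expect a genuine obstacle here, since Theorem~\ref{thm:planar} is a bookkeeping assembly of lemmas already proven; the only point requiring a little care is making explicit that the priority-order deduplication does not affect the asymptotics (constantly many configurations, $O(1)$ membership test per pair) and that the ``analogous'' symmetric \confe-case really is symmetric under exchanging the roles of the $x$- and $y$-axes in the segment tree construction, so that nothing is double-counted and nothing is missed. If anything is delicate it is confirming that each of the $O(k(Q))$ objects charged in the subroutines (rectangles through a corner, stretch-endpoints in $Q$, crossing stretches, nodes of $\mathcal{V}_Q$) is indeed witnessed by a distinct output pair, but all of these charging arguments were already given in the respective paragraphs, so the proof of the theorem is a short paragraph citing them.
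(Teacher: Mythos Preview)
Your proposal is correct and matches the paper's approach: the paper itself gives no separate proof of Theorem~\ref{thm:planar} beyond the sentence ``Therefore, we have the following theorem,'' relying exactly on the assembly of the five configuration subroutines (Section~\ref{sec:ds-easycase} for \confa--\confd, Section~\ref{sec:D1} culminating in Lemma~9 for \confe) and the space bounds of the individual structures, just as you outline. Your write-up is in fact more explicit than the paper about the bookkeeping, but there is no difference in substance.
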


\section{Higher Dimensional Case}\label{sec:dim}
In this section, we consider a set $\iset=\{S_1,\ldots,S_n\}$ of $n$
axis-parallel boxes in $\mathbb{R}^d$ for a constant $d>2$. 
Let $\delta\in\mathbb{R}$ be any constant with $1/d\leq\delta<1$.
We present a data structure that
supports $O(n^{1-\delta}\log^{d-1} n + k \polylog n)$ query time.
The size of the data structure is $O(n^{\delta d -2\delta+1}\log n)$.
There has been no known result for this problem in higher dimensions,
except that for $d=3$, the best known data structure has size of
$O(n\sqrt {n}\log n)$ and supports $O(\sqrt{n}\log^2n+k\log^2n)$ query
time~\cite{deBerg2015-journal}.

\subsection{Data Structure}
We denote the $t$th axis of $\mathbb{R}^d$ by the
\emph{$\axis{t}$-axis} for $1\leq t\leq d$.
The $\axis{t}$-\emph{projection} of a point set $A\subseteq\mathbb{R}^d$
is defined as the orthogonal projection of $A$ onto the $\axis{t}$-axis.
A box is given in the form
$\{(\axis{1},\axis{2},\ldots,\axis{d}) \mid a_t\leq \axis{t} \leq b_t , 1\leq t\leq d\}$
and has $2d$ facets. We call a facet of the box orthogonal to the $\axis{t}$-axis an
\emph{$\axis{t}$-facet} of the box for any $1\leq t\leq d$.
Our data
structure consists of the following substructures. We denote the
combination of them by $\BoxPairInt{d}$.

\paragraph{$n^{\delta}$-Clustered Grid Cells.}
For each index $1\leq t\leq d$, we construct $O(n^{\delta})$ intervals
on the $\axis{t}$-axis.  Consider the $\axis{t}$-projection 
of the $\axis{t}$-facets of the boxes of $\iset$,
which forms $2n$ points on $\axis{t}$. We choose every
$\lfloor n^{1-\delta}\rfloor$th points in the projection. 
Then we have
$O(n^\delta)$ points in the projection that define $O(n^{\delta})$ intervals
containing no chosen points in its interior. Let $\mathcal{I}_t$ be the
set of such intervals.  A \emph{grid cell} is a $d$-tuple
$(I_1,\ldots,I_d)$ of intervals $I_t\in \mathcal{I}_t$ for $1\leq t\leq d$.
Note that there are $O(n^{\delta d})$ grid cells.
For a box $B$ in $\mathbb{R}^d$, not necessarily in $\iset$, we call
the grid cell containing the corner of $B$ with
minimum $\axis{t}$-coordinates for all $1\leq t\leq d$
the \emph{canonical grid cell} of a box $B$. Every box in $\mathbb{R}^d$ has a unique
canonical grid cell.

\paragraph{Grid Containment Data Structure: {\mdseries\GridCont}. }
We mark a grid cell if it is the canonical grid cell of $\itx{i}{j}$
for a pair $(S_i,S_j)$.  We construct the \emph{grid containment data
  structure} on the marked grid cells, denoted by \GridCont, that
allows us to find all marked grid cells contained in a query
axis-parallel box.  To do this, we compute the largest box $Q'$
contained in $Q$ and aligned to the grid in $O(d\log n)$ time.
Specifically, for each $1\leq t\leq d$, we compute the union of all
intervals of $\mathcal{I}_t$ on the $\axis{t}$-axis contained in the
$\axis{t}$-projection of $Q$ in $O(\log n)$ time by applying binary
search on the intervals of $\mathcal{I}_t$. Then $Q'$ is the box whose
$\axis{t}$-projection is the union on the $\axis{t}$-interval for
every $1\leq t\leq d$. Then it suffices to find every marked grid cell
having its corner contained in $Q'$.  We construct a data structure of
size $O(n(\log n/\log\log n)^{d-1})$ on the corners of all marked grid
cells so that for any query axis-parallel box, the corners contained
in the query box can be reported in $O(\log^{d-1}n +K)$ time, where
$K$ is the size of the output~\cite{afshani2012higher}.  Since each
marked grid cell is reported exactly $2^d$ times and there are
$O(k(Q))$ marked grid cells contained in $Q$, we can find all marked
grid cells contained in $Q$ in $O(\log^{d-1} n+k(Q))$ time.

\paragraph{Box Intersection Data Structure: {\mdseries\BoxInt}.}
We construct a data structure, denoted by \BoxInt, of size $O(n\log^{d-2} n)$ 
that allows us to report the boxes of $\iset$
intersecting a query axis-parallel box in $O(\log^{d-1} n+K)$ time as follows,
where $K$ is the size of the output.

A box $S$ of $\iset$ intersects any query axis-parallel box $Q$ in
$\mathbb{R}^d$ if and only if one of the following holds: $S$ contains
a corner of $Q$, a corner of $S$ is contained in $Q$, or a facet of
$S$ intersects $Q$.  For the first case, we maintain the data
structure given by Chazelle~\cite{chazelle1986Enclusure} of size
$O(n\log^{d-2}n)$ that allows us to find all boxes of $\iset$
containing a query point (a corner of $Q$) in $O(\log^{d-1} n+K)$
time.  For the second case, we use the data structure given by Afshani
et al.~\cite{afshani2012higher} of size
$O(n(\log n/\log\log n)^{d-1})$ that allows us to find all corners of
$B$ contained in a query box in $\mathbb{R}^d$ in
$O(\log n(\log n/\log\log n)^{d-4+1/(d-1)}+K)$ time.
  
For the third case, we construct a data structure recursively using
the data structure described in Section~2 as a base structure.  An
$\axis{t}$-facet of $S$ intersects $Q$ if and only if the
$\axis{t}$-projection of the facet is contained in the
$\axis{t}$-projection of $Q$ and the projection of the facet onto a
hyperplane orthogonal to the $\axis{t}$-axis intersects the projection
of $Q$ onto the hyperplane.  To use this property, we compute the
$\axis{t}$-projection of every $\axis{t}$-facet of the boxes of
$\mathcal{S}$ and denote the set of them by $\mathcal{P}_t$ for each
$1\leq t\leq d$.  Since the $\axis{t}$-axis is orthogonal to
$\axis{t}$-facets, each projection is a point on the $\axis{t}$-axis.
We construct a one-dimensional range tree $\mathcal{T}_t$ (a balanced
binary search tree) on $\mathcal{P}_t$ for each $1\leq t\leq
\ell$. Each node $v$ of $\mathcal{T}_t$ is associated with a set
$\mathcal{S}(v)$ of boxes $S$ of $\mathcal{S}$ such that the
$\axis{t}$-projection of an $\axis{t}$-facet of $S$ is contained in the
interval of the $\axis{t}$-axis corresponding to the node. We
recursively construct the $(d-1)$-dimensional data structure on the
projections of the boxes of $\mathcal{S}(v)$ onto a hyperplane
orthogonal to the $\axis{t}$-axis. Let $\mathcal{V}$ denote the set of
the nodes in the range trees $\mathcal{T}_t$ for all indices
$1\leq t\leq d$.
Assume that given a set of $N$ axis-parallel boxes in
$\mathbb{R}^{d-1}$ for some $3\leq \ell < d$, we can construct a data
structure of size $S(N,d-1)$ that allows us to find all input boxes
intersecting a query $(d-1)$-dimensional axis-parallel box in
$T(N,d-1)$ time.  We have
\[ S(n, d) =
  \begin{cases}
    \sum_{v\in\mathcal{V}}  S(|\mathcal{S}(v)|, d-1)       & \quad \text{if } d > 3\\
    O(n\log n) & \quad \text{if } d=3.\\
  \end{cases}
\]
Moreover, since for any box of $\iset$ and any index $1\leq t\leq d$,
there are $O(\log n)$ nodes $v$ of $\mathcal{T}_t$ such that the box
is contained in $\mathcal{S}(v)$, we have
\[
  \sum_{v\in\mathcal{V}} |\mathcal{S}(v)| = O(dn\log n)= O(n\log n).
\]
Thus, the size of the data structure for the $d$-dimensional space is
$O(n\log^{d-2} n)$.
  	
Now we show that we can find all boxes of $\iset$ whose facets
intersect $Q$ using this data structure constructed on $\mathcal{S}$.
For each $1\leq t\leq d$, we find all boxes of $\iset$ whose
$\axis{t}$-facets intersect $Q$.  To do this, we consider the range
tree $\mathcal{T}_t$ and find $O(\log n)$ nodes $v$ such that the
interval corresponding to $v$ is contained in the
$\axis{t}$-projection of $Q$, but the interval corresponding to the
parent of $v$ is not contained in the $\axis{t}$-projection of $Q$.
Let $\Pi_t$ denote the set of such nodes for an index $t$ and $\Pi$
denote $\bigcup_{1\leq t\leq d}\Pi_t$.
	
For each node $v$ in $\Pi_t$, a box $S$ of $\mathcal{S}(v)$ has an
$\axis{t}$-facet intersecting $Q$ if and only if the projection of $S$
onto a hyperplane $h$ orthogonal to the $\axis{t}$-axis intersects the
projection of $Q$ onto $h$.  Thus we can find all boxes of $S$ with
$\axis{t}$-facets intersecting $Q$ using the $(d-1)$-dimensional data
structure associated with each such node.  We have
\[ T(n, d) =
  \begin{cases}
    \sum_{v\in\Pi}  (T(|\mathcal{S}(v)|, d-1)+\log n)       & \quad \text{if } d > 3\\
    O(\log n+\sum_{v\in\Pi} k_v) & \quad \text{if } d=3\\
  \end{cases}
\]
where $k_v$ is the number of boxes of $\mathcal{S}(v)$ whose
$x_t$-facets intersect $Q$ for an index $1\leq t\leq d$ and a node $v$
of $\mathcal{T}_t$.  Since for any box of $\iset$, there are at most
one node $v$ in $\Pi_t$ such that the box is contained in
$\mathcal{S}(v)$ for each index $1\leq t\leq d$, we have
\[
  \sum_{v\in\Pi} k_v = O(dK)=O(K) \text{ and } |\Pi|=O(d\log n)=O(\log
  n).
\]
	
Thus, the query algorithm for the $d$-dimensional case takes
$O(\log^{d-1}n+K)$ time.

\paragraph{Pair Finding Data Structure: {\mdseries\PairFind}.}
Recall that we mark the canonical grid cell of $\itx{i}{j}$ for each
pair $(S_i,S_j)$ of boxes of $\iset$.  However, we do not store the
pair to each canonical grid cell explicitly.  Otherwise, the size of
the data structure becomes $\Theta(n^2)$.  Instead, we present an
efficient way together with a data structure, denoted by \PairFind, to
find all pairs $(S_i,S_j)$ of $\iset$ such that the canonical grid
cell of $\itx{i}{j}$ is a given grid cell.  Specifically, we present a
data structure of size $O(n\log^{d-2} n)$ supporting
$O(\log^{d-1} n +K)$ query time, where $K$ is the size of the output.

Let $\Box$ be a given grid cell.  Recall that the canonical grid cell
of $\itx{i}{j}$ is the grid cell containing the corner $c$ of
$\itx{i}{j}$ with minimum $\axis{t}$-coordinates for all
$1\leq t\leq d$.  Let $f_t$ be the $\axis{t}$-facet of $\itx{i}{j}$
incident to $c$ for an index $1\leq t\leq d$.  Note that $f_t$ comes
from $S_i$ or $S_j$, that is, $f_t$ is contained in an
$\axis{t}$-facet of $S_i$ or $S_j$.
	
Let $F_i$ be any subset of $\{1,\ldots,d\}$, and
$F_j=\{1,\ldots,d\}\setminus F_i$.  There are $2^d$ possible pairs
$(F_i,F_j)$ of the sets.  We handle each case one by one, and find all
pairs $(S_i,S_j)$ of $\iset$ such that $f_t$ comes from $S_i$ for
every index $t\in F_i$ and $f_{t'}$ comes from $S_j$ for every index
$t'\in F_j$.
Note that $S_i$ has two $\axis{t}$-facets. By the definition of the
canonical grid cell, $f_t$ comes from the $\axis{t}$-facet of $S_i$
with smaller $\axis{t}$-coordinate.

Given a pair $(F_i,F_j)$, we first find all boxes of $\iset$ whose
$\axis{t}$-facets with smaller $\axis{t}$-coordinate intersect $\Box$
for all $t\in F_i$. The $\axis{t}$-facet of a box $S$ of $\iset$ with
smaller $\axis{t}$-coordinate intersects $\Box$ for all $t\in F_i$ if
and only if the common intersection of all $\axis{t}$-facets
intersects $\Box$. Note that the common intersection is a $(d-t)$-face
of $S$.  To find all such boxes, in the preprocessing phase, we map
each box $S$ of $\iset$ to the common intersection of the
$\axis{t}$-facets of $S$ with smaller coordinates for all $t\in F_t$.
Then the problem reduces to the problem of finding all $(d-t)$-faces
of boxes of $\iset$ 
intersecting a query box. This takes $O(\log^{d-1} n+k)$ time using
$O(n\log^{d-2} n)$ space by constructing $\BoxInt$ on all the
$(d-t)$-faces. Note that a $(d-t)$-face of a box of $\iset$ is also an
axis-parallel box in $\mathbb{R}^d$.  Also, we can check whether there
is such a box in $O(\log^{d-1} n)$ time.  Let $\iset_i$ be the set of
all such boxes.  Similarly, we check whether there is a box of $\iset$
whose $\axis{t'}$-facet with smaller $\axis{t'}$-coordinate intersect
$\Box$ for all $t'\in F_j$. Let $\iset_j$ be the set of such boxes.
If both $\iset_i$ and $\iset_j$ are nonempty, we find them explicitly
and report them as pairs $(S_i,S_j)$ with $S_i\in\iset_i$ and
$S_j\in\iset_j$ such that the canonical grid cell of $\itx{i}{j}$ is a
given grid cell in $O(\log^{d-1} n +K)$ time, where $K$ is the size of
the output.

\paragraph{Facet Intersecting Data Structure: {\mdseries$\BoxPairInt{d-1}$}.}
For each interval $I$ of $\mathcal{I}_t$ for an index $1\leq t\leq d$,
we construct a $(d-1)$-dimensional data structure for our problem.
Consider the boxes of $\iset$ whose $\axis{t}$-projections contain
$I$. We compute the projections of such boxes onto a hyperplane
orthogonal to the $\axis{t}$-axis.  These projections are boxes in
$\mathbb{R}^{d-1}$. Then we construct a $(d-1)$-dimensional data
structure $\BoxPairInt{d-1}$ on these boxes.  For $d=2$, we use the
data structure of size $O(n\log n)$ described in Section~2.

\begin{lemma}
  The size of $\BoxPairInt{d}$ is $O(n^{\delta d-2\delta+1}\log n)$.
\end{lemma}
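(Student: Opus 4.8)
The plan is to set up a recurrence for $S(n,d)$, the size of $\BoxPairInt{d}$, and solve it. First I would collect the contributions to $\BoxPairInt{d}$ from the substructures described above: the grid cell set $\mathcal{I}_t$ has size $O(n^\delta)$ for each of the $d$ coordinates, so there are $O(n^{\delta d})$ grid cells; \GridCont has size $O(n(\log n/\log\log n)^{d-1}) = O(n\log^{d-1}n)$ since it is built on the $O(k)=O(n^2)$ — no, rather on the corners of the \emph{marked} grid cells, of which there are $O(n^{\delta d})$, giving $O(n^{\delta d}\log^{d-1}n)$; \BoxInt and \PairFind each have size $O(n\log^{d-2}n)$; and finally, for each of the $O(n^\delta)$ intervals in each of the $d$ coordinate directions, we recurse on a $(d-1)$-dimensional instance. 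The key combinatorial fact is that a box of $\iset$ has its $\axis{t}$-projection containing an interval $I\in\mathcal{I}_t$ for only $O(n^{1-\delta})$ intervals $I$ in a fixed direction $t$ (since the $\axis{t}$-projection of a box spans $O(n^{1-\delta})$ consecutive grid slabs on average — actually, by the pigeonhole choice of grid points, at most $O(n^{1-\delta})$ of them lie strictly between the two facet-projections that are nearest grid points), so the total size of all the recursive $(d-1)$-dimensional instances is $\sum_{t}\sum_{I\in\mathcal{I}_t} n_I$ where $\sum_{I\in\mathcal{I}_t} n_I = O(n\cdot n^{1-\delta}) = O(n^{2-\delta})$.

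Hence the recurrence is roughly
\[
S(n,d) \;=\; O\!\left(n^{\delta d}\log^{d-1}n\right) \;+\; O\!\left(n\log^{d-2}n\right) \;+\; \sum_{t=1}^{d}\sum_{I\in\mathcal{I}_t} S(n_I, d-1),
\]
with $\sum_{t,I} n_I = O(n^{2-\delta})$ and, at the base, $S(m,2)=O(m\log m)$ from Theorem~\ref{thm:planar}. The plan is to prove by induction on $d$ that $S(m,d) = O(m^{\delta d - 2\delta + 1}\log m)$. Assuming this for $d-1$, the recursive term is $\sum_{t,I} S(n_I,d-1) = O\!\big(\sum_{t,I} n_I^{\delta(d-1)-2\delta+1}\log n_I\big)$; since $\delta(d-1)-2\delta+1 = \delta d - 3\delta + 1 < 1$ for $d\geq 2$ (as $\delta>0$), the exponent on $n_I$ is less than $1$, so the sum $\sum n_I^{c}$ with $c<1$ and $\sum n_I = O(n^{2-\delta})$ is maximized when the $n_I$ are as large as possible, i.e. each $n_I = O(n)$ and there are $O(n^{1-\delta})$ of them, giving $O(n^{1-\delta}\cdot n^{\delta d - 3\delta + 1}\log n) = O(n^{\delta d - 2\delta + 1}\log n)$. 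One then checks that the non-recursive terms $O(n^{\delta d}\log^{d-1}n)$ and $O(n\log^{d-2}n)$ are dominated: $\delta d \le \delta d - 2\delta + 1$ amounts to $1 \ge 2\delta$, which need not hold for all $\delta$ in $[1/d,1)$, so here I would be more careful — the marked grid cells number $O(n^{\delta d})$ only as a crude bound, but actually the number of \emph{distinct} canonical grid cells that get marked is $O(\min(n^{\delta d}, n^2))$ and more usefully the \GridCont structure is built over the marked cells whose count, amortized against the way it is used, can be bounded so that its size fits within $O(n^{\delta d - 2\delta + 1}\log n)$; verifying this bookkeeping is the crux.

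The main obstacle I expect is exactly this last point: pinning down the size of \GridCont (and confirming the $O(n^\delta)$-per-direction, $O(n^{1-\delta})$-overlap accounting is tight) so that every additive term in the recurrence is genuinely $O(n^{\delta d - 2\delta + 1}\log n)$ for the whole range $1/d \le \delta < 1$, rather than only for $\delta \le 1/2$. I would resolve it by observing that at the top level of recursion we pay $O(n^{\delta d}\log^{d-1} n)$ for the grid cells themselves, and that $n^{\delta d}\le n^{\delta d - 2\delta + 1}$ fails only when $2\delta > 1$; in that regime one instead charges the grid-cell storage against the recursive subproblems or re-derives it as $O(n \cdot n^{\delta(d-1)})$ using that each box contributes to $O(n^\delta)$ cells in a single direction, which brings it to $O(n^{1+\delta(d-1)}) = O(n^{\delta d - \delta + 1})$, still slightly too large, so the genuinely tight argument must use the bound $\sum_{I\in\mathcal{I}_t} n_I = O(n^{2-\delta})$ at \emph{every} level and a clean induction; carrying that induction through and checking the base case $d=2$ against Theorem~\ref{thm:planar} (where $\delta d - 2\delta + 1 = 1$, matching $O(n\log n)$) completes the proof.
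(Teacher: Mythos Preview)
Your proposal contains two concrete errors and, more importantly, is far more complicated than what is needed.

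\textbf{The counting error.} You claim that a box of $\iset$ has its $\axis{t}$-projection containing an interval $I\in\mathcal{I}_t$ for only $O(n^{1-\delta})$ intervals $I$. This is backwards. By construction there are $O(n^{\delta})$ intervals in $\mathcal{I}_t$ (each containing about $n^{1-\delta}$ facet-projection points), and a single box whose $\axis{t}$-extent spans the whole coordinate range can contain \emph{all} $O(n^{\delta})$ of them. Consequently your estimate $\sum_{I\in\mathcal{I}_t} n_I = O(n^{2-\delta})$ is unsupported; the correct crude bound is $\sum_{I} n_I \le |\mathcal{I}_t|\cdot n = O(n^{1+\delta})$.

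\textbf{The exponent error.} You assert that $\delta(d-1)-2\delta+1=\delta d-3\delta+1<1$ for $d\ge 2$. In fact $\delta d-3\delta+1<1$ is equivalent to $\delta(d-3)<0$, i.e.\ $d<3$. For $d=3$ the exponent equals $1$, and for $d>3$ it exceeds $1$, so your concavity argument for bounding $\sum n_I^{c}$ does not apply in the range where it matters.

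\textbf{What the paper actually does.} The paper avoids all of this bookkeeping by using the trivial bound $n_I\le n$ for every recursive subproblem. There are $d\cdot O(n^{\delta})=O(n^{\delta})$ intervals in total (recall $d$ is a constant), so the recurrence is simply
\[
S(n,d)=O\!\left(n(\log n/\log\log n)^{d-1}\right)+n^{\delta}\,d\cdot S(n,d-1),
\qquad S(n,2)=O(n\log n).
\]
Unrolling gives $S(n,d)=O\!\big((n^{\delta})^{d-2}\cdot n\log n\big)=O(n^{\delta d-2\delta+1}\log n)$; the additive polylog term is absorbed because $(d-2)\delta>0$ for $d>2$. No induction on $n_I$, no convexity, no case split on $\delta$.

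\textbf{On the \GridCont term.} Your worry that $O(n^{\delta d}\log^{d-1}n)$ might not be dominated by $O(n^{\delta d-2\delta+1}\log n)$ when $\delta>1/2$ is moot in the paper's argument: the paper records the size of \GridCont as $O(n(\log n/\log\log n)^{d-1})$, not $O(n^{\delta d}\polylog n)$, and uses that directly in the recurrence above. Whether or not one finds that bound on the number of marked cells fully justified, the intended proof does not require any of the additional accounting you propose.
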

\begin{proof}
  The size of \GridCont is $O(n(\log n/\log\log n)^{d-1})$, the size
  of \BoxInt is $O(n\log^{d-2} n)$, and the size of \PairFind is
  $O(n\log^{d-2} n)$. Also, we construct $\BoxPairInt{d-1}$ on each
  interval of $\mathcal{I}_t$ for each index $1\leq t\leq
  d$. Therefore, we have the following recurrence.  Let $S(n,d)$ be
  the size of $\BoxPairInt{P}$ constructed on $n$ axis-parallel boxes.
  \[
    S(n,d)= O(n(\log n/\log\log n)^{d-1}) + n^{\delta}d \cdot
    S(n,d-1).
  \]
  Since $d$ is a constant and $S(n,2)=O(n\log n)$, we have
  $S(n,d)= O(n^{\delta d-2\delta +1} \log n)$.
\end{proof}

\subsection{Query Algorithm}
Given a query of an axis-parallel box $Q$, we present an algorithm for
finding all pairs $(S_i,S_j)$ of boxes of $\iset$ with
$\itx{i}{j}\cap Q\neq\emptyset$. We observe that the canonical grid
cell of $\itx{i}{j}$ is contained in $Q$, or $\itx{i}{j}$ intersects a
grid cell intersecting the boundary of $Q$ for such a pair
$(S_i,S_j)$.  To see this, consider the union of the grid
  cells intersecting the interior of $Q$ but not intersecting the
  boundary of $Q$. The union is a box in $\mathbb{R}^d$ contained in
  $Q$.  If $\itx{i}{j}$ is contained in this union, the canonical grid
  cell of $\itx{i}{j}$ is also contained in this union and $Q$.  If
  $\itx{i}{j}$ is not contained in this union, $\itx{i}{j}$ intersects
  a grid cell intersecting the boundary of $Q$.

\paragraph{Case 1: The Canonical Grid Cell of $\itx{i}{j}$ is Contained in $Q$.}
To find every pair $(S_i,S_j)$ of boxes of $\iset$ such that the canonical grid cell of $\itx{i}{j}$ 
is contained in $Q$, 
we find all marked grid cells contained in $Q$ using $\GridCont$
in $O(\log^{2d-2} n+k(Q))$ time. Note that the size of the output
is at most $k(Q)$ since we consider the \emph{marked} grid cells only. 
For each such grid cell $\Box$, we find all pairs $(S_i,S_j)$ of boxes of $\iset$
such that the canonical grid cells of $\itx{i}{j}$ 
are $\Box$
in $O(\log^{d-1} n+ k(Q))$ time using \PairFind.
Therefore, it takes $O(k(Q)\log^{d-1} n +\log^{2d-2}n)$ time in total.

\paragraph{Case 2: $\itx{i}{j}$ Intersects a Grid Cell Intersecting the Boundary of $Q$.}
Consider the interval we constructed on the $\axis{t}$-axis
containing the $\axis{t}$-projection (point) of an $\axis{t}$-facet $f$ of $Q$
for an index $1\leq t\leq d$.  Let $H$ be the union of all grid cells whose
$\axis{t}$-projections are this interval. Note that $H$ is a slab orthogonal to the $\axis{t}$-axis.
  We show how to find all
pairs such that $\itx{i}{j}$ intersects $H$ and
$\itx{i}{j}\cap Q\neq\emptyset$. The other cases can be handled
analogously.

Consider a pair $(S_i,S_j)$ such that $\itx{i}{j}$ intersects
$H$. Either one of $S_i$ and $S_j$ has an $\axis{t}$-facet contained in $H$, or both $S_i$ and $S_j$  cross  
$H$. Moreover, there are $O(n^{1-\delta})$ boxes of $\iset$ having their $\axis{t}$-facets contained in $H$ by the construction of the grid cells. For each box
$S$ which has an $\axis{t}$-facet contained in $H$, we find all boxes $S'$
of $\iset$ intersected by $S\cap Q$ using \BoxInt in
$O(\log^{d-1} n+ K)$ time, where $K$ is the size of the output. We can
do this for all boxes belonging to the first type in $O(n^{1-\delta}\log^{d-1} n+ k(Q))$
time.

For the pairs $(S_i,S_j)$ such that $S_i$ and $S_j$ cross $H$, we use $\BoxPairInt{d-1}$ associated with $H$. For any two boxes $S_i$ and $S_j$ of $\iset$ crossing $H$, we have $\itx{i}{j}\cap Q\neq\emptyset$ if and only if $\proj{h}{S_i}\cap\proj{h}{S_j}\cap\proj{h}{Q}\neq\emptyset$,
where $\proj{h}{A}$ denotes the projection of a set $A\subseteq\mathbb{R}^d$ onto a hyperplane
orthogonal to the $\axis{t}$-axis. 
This means that the problem reduces to the $(d-1)$-dimensional problem.
We find all pairs $(S_i,S_j)$ of the boxes of $\iset$ crossing $H$ such that $\proj{h}{S_i}\cap\proj{h}{S_j}\cap\proj{h}{Q}\neq\emptyset$. 
Therefore, we find all pairs $(S_i,S_j)$ of $\iset$ such that $\itx{i}{j}$ intersects
a grid cell intersecting the boundary of $Q$.

\paragraph{Analysis of the Running Time.} Let $T(n,k,d)$ denote
the running time of our algorithm in $d$-dimensional
space with input size $n$ and output size $k$.
Then we have the following recurrence relation.
\[
T(n,k,d)  =  O(n^{1-\delta}\log^{d-1} n) +O(k' \log^{d-1} n) + \sum_{1\leq i\leq d} T(n,k_i,d-1),
\]
where the sum of $k'$ and all $k_i$'s is $O(k(Q))$.
By Theorem~10, we have $T(n,k,2)= O(\log n+ k)$. By solving the recurrence relation,
we have the following theorem.
\begin{theorem}\label{thm:high}
  We can construct data structures on a set $\iset$ of $n$
  axis-parallel boxes in $\mathbb{R}^d$ for a constant $d$ so that for
  a query axis-parallel box $Q$, the pairs $(S_i,S_j)$ of boxes of
  $\iset$ with $S_i\cap S_j\cap Q\neq\emptyset$ can be reported in
  $O(n^{1-\delta}\log^{d-1} n + k\log^{d-1} n)$ time, where $k$ is the
  size of the output. The size of the data structure is
  $O(n^{\delta d - 2\delta +1}\log n)$.
\end{theorem}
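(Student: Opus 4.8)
The plan is to prove the two assertions of the theorem separately. The size bound $O(n^{\delta d-2\delta+1}\log n)$ is exactly the preceding lemma on the size of $\BoxPairInt{d}$, so nothing new is needed there; the work lies in showing that the query algorithm described above is correct and runs within the stated time, and then in unfolding the displayed recurrence.

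For correctness I would first check that the two cases of the query algorithm together account for every pair $(S_i,S_j)$ with $\itx{i}{j}\cap Q\neq\emptyset$. Let $U$ be the union of the grid cells meeting the interior of $Q$ but not its boundary; $U$ is a box contained in $Q$. If $\itx{i}{j}\subseteq U$, then the canonical grid cell of $\itx{i}{j}$ is one of the cells comprising $U$, hence lies in $Q$ (Case~1); otherwise $\itx{i}{j}$ reaches a point outside $U$ while still meeting $Q$, so, being a connected box, it meets a grid cell that touches $\partial Q$ (Case~2). In Case~1 I would invoke the correctness of \GridCont (it returns all marked grid cells inside $Q$) and of \PairFind (for a fixed grid cell it returns every pair whose canonical grid cell is that cell, through the min-corner facets $f_t$ and the $2^d$ choices of $(F_i,F_j)$). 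In Case~2, for each boundary slab $H$, either one of $S_i,S_j$ has an $\axis{t}$-facet contained in $H$ — found by running \BoxInt on $S\cap Q$ over the $O(n^{1-\delta})$ boxes $S$ with a facet in $H$ — or both boxes cross $H$, in which case $\itx{i}{j}\cap Q\neq\emptyset$ if and only if $\proj{h}{S_i}\cap\proj{h}{S_j}\cap\proj{h}{Q}\neq\emptyset$, precisely the $(d-1)$-dimensional instance answered by the $\BoxPairInt{d-1}$ stored with $H$. The delicate point is multiplicity: a pair may qualify for both cases, and within Case~2 it may be recovered once per facet of $Q$, so it is reported $O(1)$ times (at most $2^d$ from \PairFind and $O(d)$ across the slabs of Case~2); this is what justifies the assertion in the recurrence that $k'$ together with the $k_i$'s sum to $O(k)$, and a final deduplication pass (or a priority order over the cases, as in the planar algorithm) removes the repeats within the time budget.

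Next I would tally the running time step by step: the \GridCont query costs $O(\log^{2d-2}n+k)$; each of the $O(k)$ marked cells inside $Q$ costs $O(\log^{d-1}n)$ plus its local output through \PairFind, so $O(k\log^{d-1}n)$ for Case~1; each of the $O(n^{1-\delta})$ facet boxes in Case~2 costs $O(\log^{d-1}n)$ plus its local output through \BoxInt, so $O(n^{1-\delta}\log^{d-1}n+k)$; and the crossing pairs become the $d$ recursive calls $T(n,k_i,d-1)$. This gives exactly the displayed recurrence, with $k'+\sum_i k_i=O(k)$ and base case $T(n,k,2)=O(\log n+k)$ by Theorem~\ref{thm:planar}. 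Since $d$ is a fixed constant, the recursion has constant depth $d-2$ and constant branching at most $d$, and a routine induction on $d$ gives $T(n,k,d)=O(n^{1-\delta}\log^{d-1}n+k\log^{d-1}n)$: feeding the inductive bound into each $T(n,k_i,d-1)$ yields $O(d\,n^{1-\delta}\log^{d-2}n)+O(\log^{d-2}n\sum_i k_i)=O(n^{1-\delta}\log^{d-1}n+k\log^{d-1}n)$, which dominates the two non-recursive terms; here every polylogarithmic overhead (such as the $\log^{2d-2}n$ from \GridCont) is absorbed because $n^{1-\delta}$ outgrows any fixed power of $\log n$ for the constant $\delta<1$.

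The step I expect to be the main obstacle is making the multiplicity bookkeeping rigorous — that no output pair is missed by the case split and none is counted more than a constant number of times across the two cases and all recursive branches — since the $O(k)$ bound on the aggregated output sizes inside the recurrence, and hence the whole output-sensitive analysis, rests on it. Everything after that, namely solving the recurrence and quoting the size bound from the preceding lemma, is routine given that $d$ is constant.
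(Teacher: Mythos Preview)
Your proposal is correct and follows essentially the same approach as the paper: quote the preceding lemma for the size bound, verify the two-case split via the box $U$ of interior grid cells, tally the costs of \GridCont, \PairFind, and \BoxInt to obtain the displayed recurrence, and solve it by a constant-depth induction using the planar base case. Your treatment is in fact more careful than the paper's on the multiplicity bookkeeping (the paper simply asserts that $k'+\sum_i k_i=O(k)$ and says nothing about deduplication), and your identification of this as the delicate point is apt.
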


\bibliography{paper}{}

\end{document}